\documentclass[a4paper]{article}
\usepackage{amsmath,amsthm,amssymb,amsfonts, fancyhdr, color, comment, graphicx,environ,xcolor,enumitem,caption,subcaption,cancel,makecell,multicol,hyperref,float}
\usepackage[margin=1in]{geometry} 
\usepackage{authblk}
\usepackage[numbers]{natbib}
\usepackage{algorithm}
\usepackage{algpseudocode}
\usepackage{tabularx}
\usepackage{xcolor}
\usepackage{color}
\usepackage[table]{xcolor}
\usepackage{booktabs}
\usepackage{caption}
\usepackage{adjustbox}
\usepackage{makecell}
\usepackage{prodint}
\usepackage{tikz}
\usepackage{arydshln}
\usetikzlibrary{calc, positioning, arrows.meta}
\usepackage{multirow, booktabs}
\newcommand\citeay[1]{\citeauthor{#1} (\citeyear{#1})}

\makeatletter
\newcommand{\leqnomode}{\tagsleft@true}
\newcommand{\reqnomode}{\tagsleft@false}
\makeatother

\newtheorem{prop}{Proposition}
\newtheorem{assumption}{Assumption}

\makeatletter
\newcommand\pint{\DOTSI\if@display\PRODI\else\prodi\fi\ilimits@}
\makeatother

\newlength\myindent
\def\boxit#1{\vbox{\hrule\hbox{\vrule\kern6pt
          \vbox{\kern6pt#1\kern6pt}\kern6pt\vrule}\hrule}}
    \def\wt{\widetilde}

\newtheorem{Lem}{Lemma}
\newtheorem{Th}{Theorem}

\def\bse{\begin{eqnarray*}}
\def\ese{\end{eqnarray*}}
\def\be{\begin{eqnarray}}
\def\ee{\end{eqnarray}}
\def\E{{\mathbb E}}
\def\P{{\mathbb P}}

\def\wh{\widehat}
\def\wt{\widetilde}

\newcommand{\edit}[1]{{\color{blue}#1}} % all substantive revisions responding to TG's comments are wrapped in this

\title{Improving the efficiency of infectious disease prevention trials using negative control outcome event times}
\author[1,2]{Ethan Ashby}
\author[3]{Zhewei Zhang}
\author[4]{Tanya P. Garcia}
\author[1]{Ting Ye}
\author[2]{Holly Janes}
\author[2]{Bo Zhang}
\affil[1]{Department of Biostatistics, University of Washington, Seattle, WA, USA}
\affil[2]{Vaccine and Infectious Disease Division, Fred Hutchinson Cancer Center, Seattle, WA, USA}
\affil[3]{Department of Statistics, Penn State University, State College, PA, USA}
\affil[4]{Department of Biostatistics, Gillings School of Global Public Health, University of North Carolina at Chapel Hill, Chapel Hill, NC}

\begin{document}

\maketitle

\section*{Abstract}

Baseline covariate adjustment can potentially enhance the efficiency of randomized trials by improving precision of treatment effect estimates. However, the size of the precision gain depends on how strongly the baseline covariates are prognostic for the primary outcome. In randomized trials of infectious disease prevention interventions (e.g., vaccines or passively administered antibodies), a leading prognostic factor for infection --- an individual's exposure to the pathogen --- is rarely measurable at baseline, so conventional covariate adjustment yields only modest precision gains. We propose an alternative adjustment variable: a negative control outcome (NCO) event time, an infection that is unaffected by the intervention but associated with the primary outcome through overlapping unmeasured exposure mechanisms. Unlike a baseline covariate, a NCO event time is measured after randomization, which introduces a risk of selection bias if adjustment is not handled carefully. We formalize the causal assumptions under which adjustment for the NCO event time is valid, and show that right-censoring of the NCO event time further complicates adjustment. We derive the efficient influence function for the treatment-arm-specific survivor function of the primary outcome in a model where both the primary outcome and the NCO event time may be right-censored, and use it to construct a cross-fitted, one-step estimator that is multiply robust to nuisance misspecification and asymptotically efficient when the nuisances are estimated accurately at appropriate rates. In numerical experiments, our estimator matches the performance of unadjusted and baseline-covariate-adjusted estimators when the NCO event time is uninformative, and gains precision as the NCO event time becomes more prognostic for the primary outcome. We apply our method to HVTN 704/HPTN 085, a randomized, double-blinded trial of VRC01, a broadly neutralizing antibody against the HIV-1 virus. Adjusting for the time to a bacterial sexually transmitted infection --- a negative control outcome for HIV-1 acquisition --- reduced the estimated variance of the prevention efficacy estimate by approximately 27\%, compared to roughly 2.5\% for baseline covariate adjustment.
\section{Background}

Randomized trials are essential for establishing whether a biomedical
intervention --- a vaccine, a passively administered antibody, or another
prophylactic product --- causally effects a health outcome of interest
\citep{EMA2015_covariate_adjustment,FDA2021_covariate_adjustment_draft}.
Because trials can be expensive and time-consuming, there exists great interest in improving their efficiency.
One strategy to enhance trial efficiency is by improving the \textit{precision} of the intervention's estimated effect, as more precision can increase the power of the trial's primary analysis given a fixed sample size, or alternatively, enable a trial with smaller sample size while maintaining the desired power \citep{Yang_2001, tsiatis_covariate_2008, zhang_improving_2008, Lin_2013,
Ye_2022}. However, how much precision gain covariate adjustment buys depends on how strongly prognostic the baseline covariates are for the trial's primary outcome: adjusting for highly prognostic covariates can lead to substantial precision gains, while adjusting for weakly prognostic covariates can lead to negligible precision gain and even precision loss in finite samples \citep{ICH1998E9, van_lancker_covariate_2024}.

This dependence is especially limiting in randomized, placebo-controlled prevention trials for infectious disease, which test whether an intervention reduces infection or disease caused by a target pathogen. A leading prognostic factor for infection in these trials is an individual's effective level of \textit{exposure} to the pathogen --- the frequency and intensity of contact with an infectious source. Exposure can often be described qualitatively, but it is difficult or impossible to measure directly and incorporate into a statistical model \citep{farrington_2001, Farrington2013-lq}. In our case study, the pathogen of interest is Human Immunodeficiency Virus 1 (HIV-1) in populations of men who have sex with men (MSM), for which exposure is driven mainly by an individual's rate of sexual contact with virologically unsuppressed partners living with HIV-1 infection. Measuring HIV-1 exposure directly is difficult due to unreliability of self-reported sexual behavior (due to stigma and privacy considerations) and lack of objective data on contact frequency, prevention-method adherence, and partners' virological suppression statuses \citep{Schroder_2003}. Moreover, participant behaviors that drive HIV-1 exposure can change over the course of a trial, so even high-quality baseline measure of exposure may be only weakly predictive of exposure during trial follow-up. This difficulty is not unique to HIV-1: for community-transmitted respiratory pathogens such as SARS-CoV-2, exposure depends on a similarly unmeasurable rate of spatiotemporal proximity to infectious individuals in the community. Hence, exposure may fluctuate with seasonality, mobility, school and work attendance, and adherence to non-pharmaceutical interventions, making it essentially impossible to quantify directly. Because exposure itself cannot be measured, analysts fall back on adjusting for baseline covariates that are only weakly prognostic for infection, which often leads to modest precision gains in practice.

Are there other strategies that enable adjustment, at least in part, for exposure? One idea is to exploit correlation between infection outcomes with overlapping exposure mechanisms \citep{farrington_contact_2005,
farrington_correlated_2013}. For example, HIV-1 and other sexually transmitted infections (STIs) depend on similar sexual and behavioral characteristics, and positive ecological correlations between HIV-1 and other STI rates are well documented across contexts \citep{Looker2017-kj}, including in sub-Saharan Africa \citep{Kenyon2017-zq} and among MSMs (the population studied in our case study)
\citep{Baiers2024-dc, Stenger2021-jw, Grome2021-wz, Mullick2020-vv}. STIs have also been linked to a transient increase in the likelihood of HIV-1 acquisition caused by localized inflammation of the genital tract \citep{Craib1995-oo, Wu2021-lk}. The same idea applies to respiratory pathogens: SARS-CoV-2 and other community-transmitted pathogens (e.g., influenza and RSV) plausibly depend on a common set of factors driving exposure, which is supported by the frequency of co-infection \citep{Kim2020-wj} and by prior analyses of randomized trial data \citep{Ashby_2025}. \citeay{etievant_increasing_2022} used this idea
directly in the context of HPV vaccine trials, and proposed adjustment for binary indicators of vaccine-untargeted HPV strains to improve the precision of vaccine efficacy estimates against vaccine-targeted, carcinogenic HPV strains. This approach has two limitations that motivate our work. First, most infectious disease prevention trials record time-to-event infection outcomes, and reducing a full event time to a binary indicator discards information that could otherwise translate into additional precision. Second, an infection outcome measured after randomization risks selection bias unless it is causally unaffected by the intervention \citep{hernan_2020, Rosenbaum_1984}; we refer to such an infection outcome as a \textit{negative control outcome (NCO) event time} \citep{lipsitch_negative_2010, shi_selective_2020}. To justify adjusting for even a binary indicator of an NCO event time under right-censoring, \citeay{etievant_increasing_2022} required censoring to be completely unaffected by treatment, a stronger assumption than the one needed for a standard, unadjusted analysis using Kaplan-Meier curves.

An estimator that instead adjusts for the full NCO event time would recover the information the indicator discards, and could relax the completely-random-censoring assumption to something closer to what a standard survival analysis already requires. Nevertheless, building such an estimator is not straightforward, because both the primary outcome and the NCO event time are both subject to right-censoring. While many methods exist for right-censored outcomes, and a smaller set address a single right-censored covariate, very few address cases where the outcome and a key predictor are both right-censored. This represents a methodological gap with practical importance: prevention studies routinely collect data on infections caused by other pathogens, which represent potentially compelling markers of exposure to the target pathogen. However, how to use these other infection outcomes to improve the precision of treatment effect remains unsolved.

To address this gap, we propose an estimator of the treatment-arm-specific survivor function --- and so of prevention efficacy --- that adjusts for an NCO event time when both the NCO event time and the primary outcome are right-censored. Relative to \citeay{etievant_increasing_2022}, our estimator uses the full NCO event time rather than a binary indicator, so it retains information that binarization discards; it relaxes the completely-random-censoring assumption; and it is efficient and multiply robust to nuisance misspecification. In Section 2, we introduce the causal assumptions that license adjustment for the NCO event time, show why the natural approach --- adjusting for the observed, right-censored NCO event time directly --- can fail, and derive the efficient influence function (EIF) of the treatment-arm-specific survivor function when the NCO event time is right-censored. We then use the EIF to construct a cross-fitted, one-step estimator and establish that it is multiply robust and asymptotically efficient under regularity conditions. In Section 3, we compare our estimator to unadjusted and baseline-covariate-adjusted alternatives in numerical experiments designed to mimic randomized prevention trials. In Section 4, we apply our method to HVTN 704/HPTN 085, a randomized, double-blinded trial evaluating VRC01, a broadly neutralizing antibody, against HIV-1 acquisition in a population of MSM, using time to bacterial STI infection as the NCO event time.

\section{Methods}

\subsection{Notation and Data Structure}

\par Consider a randomized trial that collects baseline covariates $X$ (e.g., age) and randomly assigns a binary treatment $A$. The primary time-to-event outcome is $Y$. In infectious disease prevention trials, $Y$ could represent time-to-infection or illness caused by a target pathogen. Suppose the study also measures an NCO event time $N$ through routine surveillance or standard-of-care follow-up. In prevention trials, $N$ may denote time-to-infection or disease caused by a pathogen unaffected by the intervention.

\par To formalize the problem, we first introduce an ``oracle'' data model. The oracle setting represents an ideal scenario in which all relevant variables are fully observed for every participant, without censoring or missingness. In this setting, each participant contributes a complete data unit. We assume the oracle data units are drawn independent and identically distributed samples from a distribution $P_0$ in statistical model $M$.
\begin{align*}
    O^{\text{oracle}} = (A, X, Y, N) \overset{iid}{\sim} P_0 \in M
\end{align*}

In the developments below, we will frequently refer to the ``full" data unit, where only the primary outcome is right-censored, but the negative control infection time is fully observed.
\begin{align*}
    O^{\text{full}} = (A, X, \wt{Y}, \Delta_Y, N)
\end{align*}
Where $\wt{Y} := \min(Y, C_Y)$ and $\Delta_Y := I(Y \leq C_Y)$, where $C_Y$ is the censoring time for the primary outcome.

Both the oracle and full data units are never actually observed. Instead, we observe a coarsened data unit where $N$ is right-censored. Define the observed data unit as follows.
\begin{align*}
    O^{\text{obs}} = (A, X, \wt{Y}, \Delta_Y, \wt{N}, \Delta_N)
\end{align*}
Where $\wt{N} := \min(N, C_N)$ and $\Delta_N := I(N \leq C_N)$, where $C_N$ is the censoring time for the negative control outcome. 

In the sequel, we will distinguish between censoring due to dropout ($C$) and administrative censoring ($C_{\text{admin}}$), as administrative censoring is usually pre-specified in advance of the trial, and can be treated like a baseline covariate. We also do not impose restrictions on the censoring times $C_N$ and $C_Y$. If $Y$ and $N$ are subject to a common censoring time, $C_N=C_Y$ with probability 1. In other cases, when $N$ and $Y$ are measured at different time grids, $C_N$ and $C_Y$ may differ in arbitrary ways.

\subsection{Estimand, Causal Assumptions, and Identification}

We adopt Neyman and Rubin's potential outcomes framework \citep{Neyman_1921, Rubin_2005}, which presumes that each participant has a tuple of potential primary outcomes representing their event times under hypothetical assignment to placebo and treatment: $\{Y(0), Y(1)\}$. Let $S_{a}(t) = P_0(Y(a) > t)$ refer to the  causal treatment-arm-specific survivor function with respect to the primary infection. Our interest lies in identifying $S_a(t_0)$ for both $a \in \{0,1\}$ at some landmark time $t_0$, such that we can construct a contrast that quantifies the treatment's prevention efficacy. Herein, we focus on the causal log relative risk reduction:
\begin{equation}\label{eq:causalVE}
\begin{split}
    \log\text{RR}(t_0) &= \log\left(\frac{1-S_1(t_0)}{1-S_0(t_0)}\right).
\end{split}
\end{equation}
This parameter quantifies the reduction in cumulative incidence of the primary event at a prespecified, landmark time $t_0$. In infectious disease prevention trials, one minus the exponentiated version of \ref{eq:causalVE} corresponds to the prevention efficacy (PE) against the target pathogen on the cumulative incidence scale at time $t_0$.

To identify $S_{a}(t)$ and therefore the target parameter \ref{eq:causalVE} from observed data, we impose several causal assumptions. The assumptions can be summarized visually in the directed acylic graph (DAG) shown in Figure \ref{fig:DAGs}.

\begin{figure}[H]
    \centering
    \includegraphics[width=0.5\linewidth]{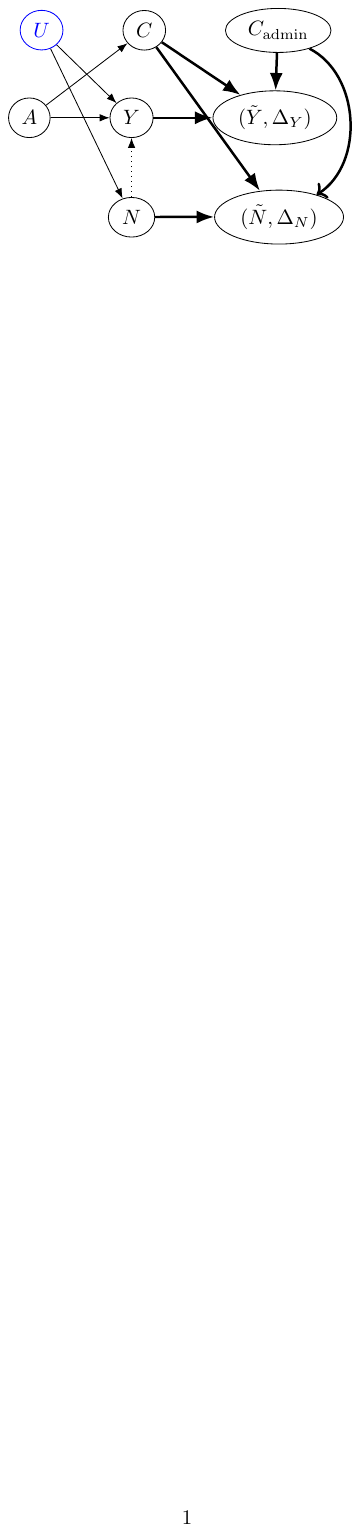}
    \caption{Graphical causal model illustrating assumptions underlying approach. $U$ is an unmeasured variable that is a common cause of both $Y$ and $N$ that is presumed to drive correlation between the infection types. The figure assumes a common dropout time $C=C_Y=C_N$. Arrows indicate causal relationships while while \textbf{bold} arrows denote deterministic relationships (implied by the deterministic functions $\min(x,y)$ and $I(x \leq y)$ used to obtain the minimum event time and the event indicator)}
    \label{fig:DAGs}
\end{figure}

\vspace{3mm}

\begin{assumption}[Consistency and Stable Unit Treatment Value Assumption (SUTVA)]
    Consistency requires that the observed outcome equals the potential outcome under the observed treatment assignment. SUTVA consists of the assumptions of (i) no interference between units, (ii) no multiple versions of treatment \citep{Rubin_1978}
\end{assumption}

Consistency of potential outcomes and no multiple versions of treatment are standard assumptions that are satisfied in most randomized prevention trials. In studies with infectious disease endpoints, the assumption of no interference between units may be violated due to spillover or herd immunity effects. This is a concern when the trial enrolls clusters of participants -- such as partners, households, or communities -- where the treatment status of one unit can affect the outcome of another unit. However, in individually randomized prevention trials, participants are embedded in a much larger target population, making interference between study participants very unlikely. Our case study considers an individually randomized prevention trial, where the impact of interference can be safely assumed to be negligible.

\vspace{3mm}

\begin{assumption}[Positive treatment probability]
\begin{align*}
    0 < \pi_0(a \mid x) \; \text{ for } a \in \{0,1\} \text{ and all } x \text{ with positive support} 
\end{align*}
\end{assumption}

Where $\pi_0(a \mid x) := P_0(A = a \mid X=x)$ is the conditional probability of receiving treatment $A=a$. Positivity implies that treatment is nondeterministic, which is satisfied by design in randomized trials.

\vspace{3mm}

\begin{assumption}[Strong ignorability \citep{Rubin_1978}]
\begin{align*}
    A \perp (Y(0), Y(1), N(0), N(1), C_N(0), C_N(1), C_Y(0), C_Y(1), X)
\end{align*}
\end{assumption} 
Where $(N(0), N(1))$ are the potential outcomes for the NCO event time, and $(C_Y(1), C_Y(0))$ and $(C_N(0), C_N(1))$ are the vector of potential censoring times for $Y$ and $N$ respectively. This assumption is satisifed by design in randomized controlled trials by virtue of randomization of treatment.

\vspace{3mm}

Assumptions 1-3 are sufficient to identify $S_a(t_0)$ using the Oracle data unit, $O^{\text{oracle}}$. However, when primary event times are right-censored, additional assumptions on the censoring mechanism are required to identify the target parameter from $O^{\text{full}}$ and $O^{\text{obs}}$.

\vspace{3mm}

\begin{assumption}[Independent censoring]
\begin{align*}
    &(C_Y,C_N) \perp (Y,N) \mid (A, X) \\
    &P_0(C_Y \geq t_0 \mid A,X) \geq \epsilon > 0 \qquad P_0(C_N \geq \min(\wt{N}, t_0) \mid A,X,\wt{Y},\Delta_Y) \geq \epsilon > 0
\end{align*}
\end{assumption} 
The first part of Assumption 4 assumes that the censoring times are independent of the event times conditional on baseline factors, an assumption frequently made in survival analysis. The latter case ensures that censoring is not exhaustive within subgroups of participants at the relevant evaluation points: at the landmark time $t_0$ for $C_Y$ and at each participant's observed $\wt{N} \wedge t_0$ for $C_N$. In settings where $N$ and $Y$ share a common censoring mechanism (i.e., $C_Y=C_N$), when $\Delta_Y=0$, $P_0(C_N \geq t \mid a, x, \wt{y}, \delta_y) = I(t \leq \wt{y})$, which still satisfies positivity at the relevant evaluation point. Some important consequences of Assumption 4 are established in Appendix \ref{prf:rem1}.

While not required for identification, we introduce Assumptions 5 and 6 on the NCO event time that unlock potential efficiency gains. In plain terms, Assumption 5 requires that the NCO event time carry no causal effect of treatment (so that adjusting for it cannot introduce selection bias). Assumption 6 requires that the NCO event time is informative about the unmeasured exposure process that impacts the primary outcome (so that adjusting for it can actually improve precision).

\vspace{3mm}

\begin{assumption}[Negative control outcome (NCO)]\label{ass:NCO}
\begin{align*}
    N = N(0) = N(1) \hspace{3mm} \text{with probability } 1
\end{align*}
\end{assumption} 

\par Assumption 5 requires that $N$ satisfy the sharp causal null hypothesis of no treatment effect. By precluding an effect of $A$ on $N$, we can adjust for $N$ without incurring selection bias \citep{Rosenbaum_1984}. A consequence of the NCO assumption is statistical independence between $N$ and $A$ conditional on any other variable(s). Next, we introduce an assumption that is sufficient to describe cases where adjusting for $N$ can lead to precision gain. 

\begin{assumption}[Approximate U-comparability \citep{shi_selective_2020}]\label{ass:Ucomparable}
    Let $U_Y$ be an unmeasured cause of $Y$ and $U_N$ be an unmeasured cause of $N$. Formally, assume $Y(a)$ and $N$ are generated from the nonparametric structural equation model (NPSEM),
    \begin{align*}
        Y(a) &= h_Y(U_Y, A=a, X, N, \epsilon_Y) \\
        N &= h_N(U_N, X, \epsilon_N)
    \end{align*}
    where $h_Y, h_N$ are unknown deterministic function and $\epsilon_Y \perp \epsilon_N$ are completely random errors. If the unmeasured common causes for $Y$ and $N$ are associated, meaning $U_Y \not \perp U_N \mid A, X$, then $N$ is correlated with $U_Y$ conditional on the other measured variables.
    \begin{align*}
        N \not \perp U_Y \mid A, X
    \end{align*}
\end{assumption}

Figure \ref{fig:DAGs} illustrates the special case where unmeasured variables $U_Y$ and $U_N$ are identical. In infectious disease prevention trials, $U_N$ and $U_Y$ may not be unequal but correlated through similar exposure pathways. As depicted on the DAG in Figure \ref{fig:DAGs}, the structural equation for $N$ in Assumption \ref{ass:NCO} does not depend on the treatment variable $A$. The structural equation for $Y(a)$ \textit{may} depend on $N$ but not vice versa, otherwise $N$ may be influenced by treatment indirectly via its affect on $Y$.

Following \citeay{Westling_2023}, Assumptions 1-5 are sufficient to identify $S_a(t)$ from the \textit{full data}, $O^{\text{full}}$, using the G-formula applied to the conditional treatment-arm-specific survivor function.
\begin{align*}
    S_{a_0}(t) &= \mathbb{E}_{O^{\text{full}}}[S_{0Y}(t \mid a_0, X, N)]
\end{align*}
Where $S_{0Y}(t \mid a_0, X, N) := P_0(Y > t \mid A=a_0, X, N)$.

Next, we introduce the following proposition, proven in Appendix \ref{prf:prop1}, which establishes the identification of $S_{a_0}(t_0)$ from the \textit{observed data} where $N$ may be right-censored. In short, our proposition shows that targeting the true NCO event time through inverse probability weighting (IPW), not the censored event time directly, achieves identification under our assumptions. This identification approach lets use use the full time-to-event information in $N$ rather than only its censoring status.
\begin{prop}\label{prop:id}
    Let $O^{\text{obs}} := (A, X, \wt{Y}, \Delta_Y, \wt{N}, \Delta_N)$. The following G-formulas that attempt to adjust for the coarsened versions of $N$, $\wt{N}$ and $\Delta_N$ respectively, do not identify $S_{a_0}(t)$ due to post-treatment selection.
    \begin{align*}
        \mathbb{E}_{O^{\text{obs}}}\left[P(Y > t \mid A=a_0, X, \wt{N})\right] &\neq S_{a_0}(t) \\
        \mathbb{E}_{O^{\text{obs}}}\left[P(Y > t \mid A=a_0, X, \Delta_{N})\right] &\neq S_{a_0}(t)
    \end{align*}
    Rather, we can identify $S_{a_0}(t)$ using inverse probability weighting (IPW) of the complete-$N$-cases.
    \begin{align*}
        \mathbb{E}_{O^{\text{obs}}}\left[\frac{\Delta_N I(A=a_0)}{G_{0N}(\wt{N} \mid A, X, \wt{Y}, \Delta_Y) \pi_0(a_0 \mid X)} S_{0Y}(t \mid a_0, X, N=\wt{N}) \right] &=S_{a_0}(t) 
    \end{align*}
\end{prop}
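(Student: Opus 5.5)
The proof has two parts: a counterexample for each of the two negative claims, and an iterated-expectation argument for the IPW identity. I would do the positive identity first, since it fixes the notation. Here $G_{0N}(t \mid a, x, \wt{y}, \delta_y) := P_0(C_N \geq t \mid A=a, X=x, \wt{Y}=\wt{y}, \Delta_Y=\delta_y)$. I take the positivity part of Assumption 4 to be evaluated at $\wt{N}$ on $\{\Delta_N=1\}$, so the weight is bounded there.

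On the event $\{\Delta_N=1\}$ we have $\wt{N}=N$. The integrand therefore equals
\[
\frac{I(C_N \geq N)\, I(A=a_0)}{G_{0N}(N \mid A, X, \wt{Y}, \Delta_Y)\, \pi_0(a_0 \mid X)}\, S_{0Y}(t \mid a_0, X, N),
\]
which is a function of $(A, X, \wt{Y}, \Delta_Y, N, C_N)$. I would condition on $(A, X, \wt{Y}, \Delta_Y, N)$, i.e. on $O^{\text{full}}$.

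The key step is to show that
\[
P_0(C_N \geq N \mid A, X, \wt{Y}, \Delta_Y, N) = G_{0N}(N \mid A, X, \wt{Y}, \Delta_Y).
\]
This needs $C_N \perp N \mid (A, X, \wt{Y}, \Delta_Y)$. I would derive it from Assumption 4, $(C_Y, C_N) \perp (Y, N) \mid (A, X)$, which is presumably the consequence recorded in Appendix \ref{prf:rem1}.

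Write $(\wt{Y}, \Delta_Y)$ as a function of $(Y, C_Y)$. The joint density of $(N, C_N, \wt{Y}, \Delta_Y)$ given $(A,X)$ can then be computed by integrating the product $f(y, n \mid a, x)\, f(c_y, c_n \mid a, x)$. The event $\{\wt{Y}=\wt{y}, \Delta_Y=\delta_y\}$ splits into two pieces. When $\delta_y=1$ it is $\{Y=\wt{y}, C_Y \geq \wt{y}\}$, and the density factors into a term in $(n, \wt{y})$ times a term in $(c_n, \wt{y})$. When $\delta_y=0$ it is $\{C_Y=\wt{y}, Y>\wt{y}\}$, and the same factorization holds. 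This gives the conditional independence.

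This is the main obstacle. If $C_Y$ and $C_N$ are dependent, for instance $C_Y=C_N$, the factorization still holds because the $C$-part and the $(Y,N)$-part separate for each fixed value of $(\wt{y}, \delta_y)$. I would check the degenerate common-censoring case explicitly, where $G_{0N}(t\mid\cdot)=I(t\le \wt{y})$ when $\Delta_Y=0$. The rare case of a positive-probability event with $G_{0N}=0$ is excluded by positivity.

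Given this, the inner expectation reduces the weight ratio to $1$. The expression becomes $\E\big[I(A=a_0)\, S_{0Y}(t\mid a_0, X, N)/\pi_0(a_0\mid X)\big]$, which now involves only $(A, X, N)$. Conditioning on $(X, N)$ and using $E[I(A=a_0)\mid X,N] = \pi_0(a_0\mid X)$ gives $\E[S_{0Y}(t\mid a_0,X,N)]$. The step $E[I(A=a_0)\mid X,N] = \pi_0(a_0\mid X)$ holds because $A \perp N \mid X$, which follows from Assumptions 3 and 5. The g-formula identification from the full data under Assumptions 1–5 (following \citeay{Westling_2023}, as stated earlier) then gives $S_{a_0}(t)$.

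For the two non-identification claims, I would construct a simple counterexample in which $Y$ and $N$ share a latent $U$, with no $X$, and treatment affects $C_N$. For example, let $C_N=C$ depend on $A$, with treated participants dropping out earlier, and let $U$ affect both $N$ and $Y$. Then $\wt{N}=\min(N, C(A))$ and $\Delta_N$ are post-treatment variables affected by $A$, and they are colliders on the path $A \to C_N \to \wt{N} \leftarrow N \leftarrow U \to Y$.

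With binary $U$, exponential event times, and deterministic arm-specific censoring times, I would compute both sides in closed form. The first formula uses the law of $\wt{N}$ marginalized over both arms. The conditional survivor $P(Y>t\mid a_0,\wt{N})$, however, depends on the arm-specific mixing of $U$ given $\wt{N}$. Hence $\E[P(Y>t\mid a_0,\wt N)] \neq E[P(Y>t\mid a_0, U)] = S_{a_0}(t)$ once the censoring distributions differ across arms. The same computation with $\Delta_N$ gives the second inequality. A single numerical instance suffices for each claim.
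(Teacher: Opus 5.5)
You flagged the exclusion restriction as the main obstacle, but that step is fine: your factorization gives $C_N\perp N\mid(A,X,\wt Y,\Delta_Y)$, even when $C_N=C_Y$. The gap is the next step, where you reduce the weight ratio to one.

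\textbf{The gap in the IPW identity.} Conditioning on $O^{\text{full}}$ gives the ratio $P_0(C_N\ge N\mid A,X,\wt Y,\Delta_Y,N)/G_{0N}(N\mid A,X,\wt Y,\Delta_Y)$. This equals one only where $G_{0N}(N\mid\cdot)>0$. On $\{G_{0N}(N\mid\cdot)=0\}$ you have $\Delta_N=0$ almost surely, so that mass is simply lost. You therefore need $G_{0N}(N\mid A,X,\wt Y,\Delta_Y)>0$ almost surely, i.e.\ positivity at the true $N$.

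Neither Assumption 4 (evaluated at $\min(\wt N,t_0)$) nor your reading of it (at $\wt N$ on $\{\Delta_N=1\}$) gives this. Both constrain $G_{0N}$ only where $N$ is observed, and say nothing about $\{G_{0N}(N\mid\cdot)=0\}\subset\{\Delta_N=0\}$.

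It also fails in exactly the common-censoring case you propose to check. If $C_N=C_Y=C$, then on $\{\Delta_Y=0\}$ you have $G_{0N}(u\mid\cdot)=I(u\le\wt Y)$. This vanishes at $N$ on $\{\Delta_Y=0,\ N>\wt Y\}=\{C<\min(Y,N)\}$, which has positive probability whenever there is dropout. Take $P_0(C\ge u\mid A,X)>0$ for all $u$, and write $h(N):=S_{0Y}(t\mid a_0,X,N)$ and $w:=I(A=a_0)/\pi_0(a_0\mid X)$. Carrying your computation through gives
\[
\E\left[\frac{w\,\Delta_N\,h(\wt N)}{G_{0N}(\wt N\mid A,X,\wt Y,\Delta_Y)}\right]=S_{a_0}(t)-\E\bigl[w\,I\{C<\min(Y,N)\}\,h(N)\bigr].
\]
For example, with $C$, $Y$, $N$ independent unit exponentials within arm $a_0$, the weights average to $2/3$ rather than $1$. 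So the ``rare case'' is neither rare nor excluded by the stated positivity. With the weight as printed, the identity is false in this case.

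\textbf{Comparison with the paper's route for the identity.} The paper's proof avoids this by conditioning only on $(A,X,N)$ and weighting by $G_0(\wt N\mid A,X)=P_0(C\ge\cdot\mid A,X)$, using $C\perp N\mid(A,X)$ directly from Assumption 4. No exclusion restriction is needed. The cancellation then requires only $P_0(C\ge N\mid A,X,N)>0$, which survives $C_N=C_Y$. Strictly, though, that is a coarser weight than the one printed in the proposition.

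Your finer conditioning on $O^{\text{full}}$ is the natural way to treat the printed weight. It proves the identity only if you add the assumption $G_{0N}(N\mid A,X,\wt Y,\Delta_Y)>0$ almost surely, which rules out common censoring; otherwise switch to the $(A,X)$-only weight. In either version positivity must hold at $N$ over its whole support. Administrative censoring of $N$ must therefore also be handled, e.g.\ by truncating the adjustment variable at the administrative horizon. Your final steps ($A\perp N\mid X$ from Assumptions 3 and 5, then the full-data g-formula) match the paper's.

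\textbf{The non-identification claims.} The paper gives no explicit example here. It notes that $P(Y>t\mid A=a_0,X,\wt N=\wt n)=P\{Y(a_0)>t\mid X,\wt N(a_0)=\wt n\}$ must be standardized over the law of $\wt N(a_0)$. The observed-data expectation instead standardizes over the $\pi_0$-mixture of the laws of $\wt N(0)$ and $\wt N(1)$, and the paper concludes that the two differ in general. Your counterexample is the more rigorous way to establish an existential ``$\neq$'', and it exhibits the same mechanism.

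For $\Delta_N$, deterministic arm-specific censoring times work. The bias is $P(A=1-a_0)(p_{1-a_0}-p_{a_0})\{\mu(1)-\mu(0)\}$, where $p_a=P(\Delta_N=1\mid A=a)$ and $\mu(\delta)=P(Y>t\mid A=a_0,\Delta_N=\delta)$.

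For $\wt N$, deterministic times do not work. If $c_1<c_0$, the pooled law of $\wt N$ charges $(c_1,c_0]$, where $P(Y>t\mid A=1,\wt N)$ is undefined. It also has an atom at $c_1$, where the arm-$0$ regression is determined only up to a null set. So the first functional is not well defined. Use arm-specific censoring laws with common support instead, e.g.\ exponentials with different rates.
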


In short, adjusting for observed NCO event indicators or event times as suggested by \citet{etievant_increasing_2022} can fail to identify $S_a(t)$ if censoring is impacted by treatment \citep{Rosenbaum_1984}. Assuming censoring times are completely unaffected by treatment may be plausible if participants are blinded and the treatment does not impact study dropout. However, this assumption is hard to justify through scientific knowledge and is a stronger assumption than required for a standard Kaplan-Meier analysis. 

In the next section, we will use this IPW identification approach in \ref{prop:id} as the basis for constructing an estimating equation for $S_{a_0}(t_0)$ with favorable theoretical properties.

\subsection{Efficiency calculations}

In this section, our goal is to derive of the efficient influence function (EIF) of the treatment-arm-specific survival probability $S_{a_0}(t_0)$ in the observed data model. We focus on the EIF because it characterizes the minimum asymptotic variance achievable among all regular asymptotic linear (RAL) estimators in a semiparametric model. The EIF can also be used to construct the most efficient estimator of $S_{a_0}(t_0)$ among all RAL estimators. Then, inference on $S_{a}(t_0)$ for $a \in \{0,1\}$ can be translated to inference on the target parameter $\log \text{RR}(t_0)$ using the delta method.

To derive the EIF in the observed data model when $N$ is right-censored, we adopt a projection approach. As a starting point, following Theorem 1 of \citeay{Westling_2023}, the EIF of $S_{a_0}(t_0)$ in the full-data model where $Y$ is right-censored but $N$ is completely observed is given by the following,
\begin{equation}\label{eq:full_EIF}
\resizebox{0.93\textwidth}{!}{
$\begin{aligned}
    \phi^F_{a_0,t_0}(o^{\text{full}}) &= S_{0Y}(t_0 \mid a_0, x, n) \left[1 - \frac{I(a=a_0)}{\pi_0(a_0 \mid x)} \left\{\frac{I(\wt{y} \leq t_0, \delta_y=1)}{S_{0Y}(\wt{y} \mid a, x, n) G_{0Y}(\wt{y} \mid a, x)} - \int_0^{\min(t_0, \wt{y})} \frac{\Lambda_{0Y}(du \mid a, x, n)}{S_{0Y}(u \mid a, x, n) G_{0Y}(u \mid a, x)}\right\}\right]
\end{aligned}$}
\end{equation}
Where $S_{0Y}(t \mid a, x, n) := P_0(Y > t \mid A=a, X=x, N=n)$ is the conditional survival function for the primary outcome, $\Lambda_{0Y}(u \mid a, x, n)$ is the corresponding cumulative hazard of $Y$, $G_{0Y}(t \mid a, x) := P_0(C_Y \geq t \mid A=a, X=x)$ is the conditional survival function of $C_Y$, and $\pi_0(a\mid x) := P_0(A=a \mid X=x)$ is the treatment propensity score.

In reality, $N$ is also subject to right-censoring. So instead of the full data unit, we observe $O^{\text{obs}} = (A,X,\wt{Y},\Delta_Y, \wt{N}, \Delta_N)$ where $\wt{N}:=\min(N, C_N)$ and $\Delta_N := I(N \leq C_N)$. Note we will make no assumption about the joint relationship between $C_N$ and $C_Y$. They may coincide when $N$ and $Y$ are measured on a common visit grid and are subject to the same dropout process, they may be deterministically related, or they differ in arbitrary ways. 

Before we present the observed-data efficient influence function, we define two key nuisance parameters. The first is the conditional survival function of $C_N$:
\begin{align*}
    G_{0N}(t \mid a,x,\wt{y},\delta_y) := \exp\left\{- \int_0^t \lambda_{C_N}(u \mid a, x, \wt{y}, \delta_Y) du\right\}.
\end{align*}
This parameter is identifiable from observed data based on the following exclusion restriction implied by Assumption 4 (see Appendix \ref{prf:rem1} for details).
\begin{equation}\label{rem:indep_1}
    C_N \perp N \mid (A,X,\wt{Y},\Delta_Y).
\end{equation}
We note that when both $Y$ and $N$ are subject to the same censoring time ($C=C_N=C_Y$), $G_{0N}$ is a derived quantity from $G_{0Y}$ rather than a separate nuisance. In this case, $G_{0N}(\cdot \mid a, x, \wt{y}, \delta_y=0)$ yields a point mass at $\wt{y}$ while when $G_{0N}(\cdot \mid a, x, \wt{y}, \delta_y=1) \equiv G_{0N}(\cdot \mid a, x, C>\wt{y})$ yields a left-truncated survival function.

The second nuisance parameter is the conditional mean of the full-data EIF, conditional on all the other observed variables.
\begin{align*}
    Q_0(u \mid a, x, \wt{y}, \delta_y) := \mathbb{E}[\phi^F_{a_0,t_0}(\wt{y}, \delta_y, a, x, N) \mid N \geq u, A=a, X=x, \wt{Y}=\wt{y}, \Delta_Y=\delta_y]
\end{align*}

Below, we present the observed-data EIF, which is obtained by applying the theory of influence function projections onto the tangent space of coarsened data models \citep{Tsiatis2006-cc} (see Appendix \ref{prf:Thm1} for a proof). In plain words, the observed-data EIF combines two sources of information: for participants whose NCO event time is fully observed ($\Delta_N=1$), it uses the full-data EIF directly; for participants whose NCO event time is censored ($\Delta_N=0$), it uses the partial information contained in $Q_0$, the conditional mean of the full-data EIF given what is observed, integrated against the martingale for the NCO censoring process. This is what allows the estimator to remain efficient even though the NCO event time --- the variable driving the precision gain --- is not always fully observed.

\begin{Th}[Observed-data EIF]\label{th:EIF_deriv}
Let $\phi^F_{a_0,t_0}$ denote the full-data EIF as shown in \ref{eq:full_EIF}. Under censoring at random (Assumption 4), the observed-data influence function corresponding to $O^{\text{obs}} = (A,X,\wt{N},\Delta_N, \wt{Y}, \Delta_Y)$ is given by the following.
\begin{equation}\label{eq:AIPWCC_IF}
    \begin{split}
    &\phi^{\text{obs}}_{a_0,t_0}(o^{\text{obs}}) = \frac{\delta_{n}}{G_{0N}(\wt{n} \mid a,x,\wt{y},\delta_y)} \phi^F_{a_0,t_0}(\wt{y},\delta_{y},a,x,\wt{n}) + \int_0^{\infty} \frac{Q_0(u \mid a, x, \wt{y}, \delta_y)}{G_{0N}(u \mid a,x,\wt{y},\delta_y)} dM_{C_N}(u \mid a, x, \wt{y}, \delta_y)
    \end{split}
\end{equation}
Where $dM_{C_N}(u \mid a, x, \wt{y}, \delta_y) := dN_{C_N}(u) - \lambda_{C_N}(u \mid a, x, \wt{y}, \delta_y) I(N \geq u, C_N \geq u) du$ is the canonical martingale increment for the censoring variable $C_N$, where $dN_{C_N}(u) = I(C_N = u, N > u)$ is the jump process for NCO censoring and $\lambda_{C_N}(u \mid a, x, \wt{y}, \delta_y)$ is the NCO censoring hazard. 

Moreover, because $\phi^F_{a_0, t_0}$ is the nonparametric efficient influence function in the full-data model and $\phi^{\text{obs}}_{a_0,t_0}$ is its unique projection onto the observed-data tangent space, $\phi^{\text{obs}}_{a_0,t_0}$ is necessarily the observed-data efficient influence function.
\end{Th}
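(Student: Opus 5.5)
We would follow the coarsened-data (monotone coarsening) theory of \citet{Tsiatis2006-cc}, Chapters 9--10. In this view the full data are $O^{\text{full}}=(A,X,\wt{Y},\Delta_Y,N)$, and the observed data arise by monotonely coarsening the single component $N$ through the censoring time $C_N$. The always-observed part is $Z:=(A,X,\wt{Y},\Delta_Y)$. By Assumption 4 and the exclusion restriction \eqref{rem:indep_1}, the coarsening is at random: the hazard of $C_N$ at $u$, given $Z$ and $N\ge u$, depends only on $Z$. In this setting the observed-data influence functions are the augmented IPCW maps of full-data influence functions. The efficient one is obtained by taking the full-data EIF and subtracting the projection of its IPCW version onto the coarsening tangent space.

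\textbf{Step 1 (IPCW representation).} We first show that $\frac{\Delta_N}{G_{0N}(\wt{N}\mid Z)}\phi^F_{a_0,t_0}(Z,\wt{N})$ has conditional mean $\phi^F_{a_0,t_0}(O^{\text{full}})$ given $O^{\text{full}}$. On $\{\Delta_N=1\}$ we have $\wt N=N$, and $E[\Delta_N\mid O^{\text{full}}]=P(C_N\ge N\mid Z,N)=G_{0N}(N\mid Z)$ by \eqref{rem:indep_1}. The positivity clause of Assumption 4 keeps the weights bounded. (Since $\phi^F$ depends on $N$ only through $S_{0Y}(\cdot\mid a,x,n)$, only the values $N\le t_0$ matter in practice; we truncate at $\min(\wt N,t_0)$ where needed, matching the form of the positivity condition.) Combined with Proposition \ref{prop:id}, this makes the IPCW term a valid observed-data influence function, though an inefficient one.

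\textbf{Step 2 (projection onto the coarsening tangent space).} Under coarsening at random, the coarsening nuisance tangent space is
\[
\Lambda_2=\Big\{\int h(u,Z)\,dM_{C_N}(u\mid Z):\ h \text{ arbitrary}\Big\}.
\]
We would use the standard monotone-coarsening identity (Tsiatis, Theorem 10.1/10.4),
\begin{align*}
\Pi\Big[\tfrac{\Delta_N\phi^F}{G_{0N}}\,\Big|\,\Lambda_2\Big]
= -\int_0^\infty \frac{E[\phi^F\mid N\ge u, Z]}{G_{0N}(u\mid Z)}\,dM_{C_N}(u\mid Z).
\end{align*}
To check it, we verify that the residual is orthogonal to every $\int h\,dM_{C_N}$. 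This uses the predictable covariation of the counting-process martingale, which is $E[\int h_1h_2\lambda_{C_N}I(N\ge u,C_N\ge u)du]$. We also need the identity
\[
\frac{\Delta_N}{G_{0N}(\wt N)}=1-\int\frac{dM_{C_N}(u)}{G_{0N}(u)},
\]
which follows from Duhamel's equation for the product-integral $G_{0N}$. Subtracting this projection gives exactly \eqref{eq:AIPWCC_IF}, with $Q_0(u\mid Z)=E[\phi^F\mid N\ge u,Z]$.

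\textbf{Step 3 (efficiency).} The full-data model places no restriction on $P_0$ beyond what the design implies. Here $\pi_0$ is known or estimated, and its projection contributes nothing because $\phi^F$ already has the augmented form. The full-data tangent space is therefore saturated, so $\phi^F$ is the unique full-data influence function, and by Theorem 1 of \citeay{Westling_2023} it is the EIF. Tsiatis's Theorem 10.1 then implies that the observed-data influence functions are $\{IPCW(\phi^F)-\Pi[\cdot\mid\Lambda_2]\}$ as $\phi^F$ ranges over full-data influence functions. Uniqueness of $\phi^F$ forces uniqueness of the observed-data influence function, and hence it is the EIF.

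\textbf{Main obstacle.} We expect the hardest part to be Step 2's rigor in the common-censoring case $C_N=C_Y$. There $G_{0N}(\cdot\mid Z)$ puts a point mass at $\wt y$ when $\delta_y=0$, so the continuous-hazard notation $\lambda_{C_N}du$ must be replaced by a general cumulative hazard $d\Lambda_{C_N}$. The product-integral (Duhamel) identity and the martingale covariation must then be redone with discrete jumps. We would handle this first by carrying out all computations with $d\Lambda_{C_N}$ and the left-continuous $G_{0N}(u)=P(C_N\ge u\mid Z)$, and only afterwards specialize to the displayed continuous form.
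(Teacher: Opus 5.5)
Your route is the same as the paper's. You treat the censoring of $N$ by $C_N$ as a monotone CAR coarsening of $O^{\text{full}}$ given $Z=(A,X,\wt Y,\Delta_Y)$. Tsiatis's Theorems 10.1/10.4 then give the augmentation $h^*=Q_0$, and global efficiency follows from uniqueness of the full-data EIF. Your Steps 1--2 spell out what the paper only cites, but two of your justifications fail exactly where that citation needs care (the paper's proof is silent on these points too).

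\textbf{Positivity over the support of $N$.} Step 1, and Tsiatis's theorems themselves, assume $P(\Delta_N=1\mid O^{\text{full}})=G_{0N}(N\mid Z)\ge\epsilon$. That requires $G_{0N}(n\mid z)$ to be bounded below on the whole conditional support of $N$. Assumption 4 only controls $G_{0N}$ at $\min(\wt N,t_0)$. Your bridge is that only $N\le t_0$ matters because $\phi^F$ depends on $N$ through $S_{0Y}(\cdot\mid a,x,n)$, and it is false. For $u\le t_0$, $S_{0Y}(u\mid a,x,n)$ generally varies with $n$ for $n>t_0$: a late NCO event signals low exposure, and under the paper's Gaussian copula $P(Y>t_0\mid N=n)$ is monotone in $n$ everywhere. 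Truncation is legitimate only if you replace the adjustment variable by $(N\wedge t_0, I(N\le t_0))$, which changes $\phi^F$ and $Q_0$. Otherwise you must assume the stronger positivity explicitly.

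\textbf{Common censoring.} This case is not fixed by swapping $\lambda_{C_N}du$ for $d\Lambda_{C_N}$. If $C_N=C_Y$ and $\delta_y=0$, then $G_{0N}(u\mid z)=I(u\le\wt y)$, so $G_{0N}(N\mid Z)=0$ on $\{N>\wt Y\}$, an event of positive probability. On that stratum the IPCW term has conditional mean $I(N\le\wt Y)\phi^F\ne\phi^F$. With a unit atom at $\wt y$, the martingale increment there is $I(N>\wt y)-I(N\ge\wt y)=0$ a.s. Your planned computation therefore returns $\Delta_N\phi^F$ on that stratum, which is biased. A shared censoring time must be handled as a single CAR coarsening of $(Y,N)$ by $C$, with the $N$-history as a time-varying auxiliary. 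Alternatively, restrict the theorem to a $C_N$ with $G_{0N}(\cdot\mid z)$ bounded below on the support of $N$.

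\textbf{Uniqueness in Step 3.} Orthogonality of $\phi^F$ to the $A\mid X$ scores shows that knowing $\pi_0$ leaves the full-data EIF unchanged; it does not make the tangent space saturated. Suppose you impose known $\pi_0$, or $N\perp A\mid X$ (implied by Assumptions 3 and 5). Then the full-data influence functions are $\phi^F+h$ for, e.g., every $h(A,X,N)$ with $E[h\mid X,N]=0$. Tsiatis's recipe then requires minimizing the variance of the AIPCW map over $h$. That generally does not select $h=0$ when $C_N$ depends on $A$ or on $(\wt Y,\Delta_Y)$. A toy case: estimate $E[N]$ with $N\perp A$ and $N$ missing at random with probability $p(A)$. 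The complete-case mean has variance $\mathrm{var}(N)/E\{p(A)\}$, below the AIPW value $\mathrm{var}(N)\,E\{1/p(A)\}$. So the uniqueness argument, yours and the paper's, gives efficiency only in the model that leaves the full-data law unrestricted, and the conclusion should be stated relative to that model.
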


Our interest in the observed-data efficient influence function (EIF) stems from its role as a optimal estimating function for the target parameter $S_{a_0}(t_0)$. Next, we introduce an estimating-equation-based estimator of $S_{a_0}(t_0)$ built atop the EIF along with its theoretical properties. 

\subsection{A cross-fitted one-step estimator}

As written, $\phi_{a_0,t_0}^{\text{obs}}$ depends on at least four nuisance parameters: the treatment propensity $\pi_0(a \mid x)$, the conditional survival function of $Y$-censoring $G_{0Y}(t \mid a, x)$, the conditional density of the irrelevant infection outcome $f_{0N}(n \mid a, x, \wt{y}, \delta_y)$, and the conditional survival function of the primary outcome $S_{0Y}(y \mid a, x, n)$. If $N$ and $Y$ are subject to different censoring processes, we need to estimate an additional nuisance parameter, the conditional survival function of $C_N$: $G_{0N}(t \mid a, x, \wt{y}, \delta_y)$. Suppose temporarily that we have a means of estimating all these quantities. Let $\wh{\phi}^{\text{obs}}_{n, a_0, t_0}$ denote the value of $\phi^{\text{obs}}_{a_0, t_0}$ when the nuisance estimators are plugged in for the unknown true values. The classic one-step estimator is $\mathbb{P}_n \wh{\phi}^{\text{obs}}_{n, a_0, t_0}$, where $\mathbb{P}_n$ denotes the empirical probability measure (i.e., sample average) of the EIF values \citep{Tsiatis2006-cc, Bickel_semi_1993}. In this case, the one-step estimator is also an estimation equations-based estimator because the influence function is linear in the target parameter.

The asymptotic properties of RAL estimators, including the one-step estimator, depend on the nuisance parameters in two important ways. First, the nuisance estimators must converge sufficiently quickly to their true values to control so-called second-order remainder terms. Second, the nuisance estimators must fall within sufficiently small function classes to ensure negligibility of empirical process terms. These criteria appear to be in conflict: simple parametric models for nuisance functions are unlikely to converge to their true values, which motivates the use of more flexible, data-adaptive estimators. However, flexible nuisance estimators are unlikely to fall in sufficiently simple function classes, leading to failure to control empirical process terms. A well-established approach to resolve this conflict is to combine data-adaptive nuisance estimation with cross-fitting, which eliminates the complexity constraint on the nuisances \citep{Bickel1982-mu, Chernozhukov2017-mh, chernozhukov_2018}.

We define a cross-fitted one-step estimator for $S_{a_0}(t_0)$ as follows. Following \citet{Westling_2023}, for a deterministic integer $2 \leq K \leq \lfloor n/2 \rfloor$, we randomly divide participants $\{1, \ldots, n\}$ into $K$ disjoint folds $\mathcal{V}_{n,1}, \ldots, \mathcal{V}_{n,K}$ with cardinalities $n_1, \cdots, n_K$. We require that these sets be of as close to equal sizes as possible, so that $|n_k-n/K|\le 1$ for each $k$, and that the number of folds $K$ be bounded as $n$ grows. For each $k \in \{1, \ldots, K\}$, we define $\mathcal{T}_{n,k} := \{O_i^{\text{obs}} : i \not \in \mathcal{V}_{n,k}\}$ as the training set for fold $k$. Define $\wh{\phi}^{\text{obs}}_{n,k,a_0,t_0}$ as the EIF where the unknown nuisance parameters are replaced by estimates calculated exclusively from the training set $\mathcal{T}_{n,k}$ , which includes all observations not in the $k$-th fold $\mathcal{V}_{n,k}$. The cross-fitted one-step estimator is
\begin{equation}\label{eq:cfosest}
    \wh{S}_{n, a_0}(t) := \frac{1}{n} \sum_{k=1}^K \sum_{i \in \mathcal{V}_{n,k}} \wh{\phi}^{\text{obs}}_{n,k,a_0,t_0}(O_i^{\text{obs}}).
\end{equation}

Below, we introduce two important theoretical results for the cross-fitted, one-step estimator. The theoretical results use the following regularity conditions:
	\renewcommand{\labelenumi}{(C\arabic{enumi})}
	\renewcommand{\theenumi}{C\arabic{enumi}}
	\begin{enumerate}
		\item\label{assump:c1} There exists $\pi_\infty$, $G_{\infty,Y}$, $G_{\infty,N}$, $S_{\infty,Y}$, and $f_{\infty,N}$ such that:
		\bse
		&&\max_k\ \mathbb E\left\{\frac{1}{\wh \pi_{n,k}(a_0\mid X)}-\frac{1}{\pi_\infty(a_0\mid X)}\right\}^2 \overset{p}{\to} 0,\\
		&&\max_k\ \E\left\{ \sup_{u\in [0,t_0]}\left|\frac{1}{\wh G_{n,k,Y}(u\mid a_0,X)}-\frac{1}{G_{\infty,Y}(u\mid a_0,X)}\right| \right\}^2\overset{p}{\to} 0,\\
		&&\max_k\ \E\left\{\sup_{u\in[0,t_0]}\left|\frac{\wh S_{n,k,Y}(t_0\mid a_0,X)}{\wh S_{n,k,Y}(u\mid a_0,X)}-\frac{S_{\infty,Y}(t_0\mid a_0,X)}{S_{\infty,Y}(u\mid a_0,X)}\right|\right\}^2\overset{p}{\to} 0,\\
		&& \max_k\ \E\left\{\sup_{u\in[0,\tau_N]}\left|\frac{1}{\wh G_{n,k,N}(u\mid a_0,X,\wt Y,\Delta_Y)}-\frac{1}{G_{\infty,N}(u\mid a_0,X,\wt Y,\Delta_Y)}\right|\right\}^2\overset{p}{\to}0,\\
		&&\max_k\ \E\left\{\sup_{u\in[0,\tau_N]}\left|\wh Q_{n,k}(u\mid a_0,X,\wt Y,\Delta_Y)-Q_{\infty}(u\mid a_0,X,\wt Y,\Delta_Y)\right|\right\}^2\overset{p}{\to}0,
		\ese
        where $\tau_N>0$ is a fixed constant, and the nuisance functions
        $G_N$ and $Q$ are estimated and evaluated on the interval
        $[0,\tau_N]$.

		\item\label{assump:c2} There exists a constant $\varepsilon>0$ such that, with probability tending to one, for almost every relevant covariate value, $	\wh\pi_{n,k}(a_0\mid x)\ge \varepsilon$, $\pi_\infty(a_0\mid x)\ge \varepsilon$, $\wh G_{n,k,Y}(t_0\mid a_0,X)\ge \varepsilon$, $G_{\infty,Y}(t_0\mid a_0,x)\ge \varepsilon$, $\wh G_{n,k,N}(u\mid a_0,x,\wt y,\delta_y)\ge \varepsilon$, and $G_{\infty,N}(u\mid a_0,x,\wt y,\delta_y)\ge \varepsilon$.
	\end{enumerate}

Condition \ref{assump:c1} requires convergence of the nuisance estimators to fixed limit functions, which is used to control empirical process terms. Condition \ref{assump:c2} is a positivity assumption ensuring that the treatment and censoring probabilities remain uniformly bounded away from zero. 

In plain terms, the first theoretical result establishes that the cross-fitted one-step estimator remains consistent even when some of its nuisance functions are estimated incorrectly. This property is referred to as multiple robustness. Formally, we require only a subset of the nuisance estimators to be consistent for the true unknown parameters, which is formalized in Condition \ref{assump:c3} below. 

\begin{Th}[Multiple robustness]\label{th:MR}
Suppose Assumptions 1-6 and regularity Conditions \ref{assump:c1}-\ref{assump:c2}  hold. Moreover, suppose that the at least one of the following subsets of nuisance estimators are consistent for the true values. Letting a $\infty$-subscript refer to an estimator's probability limit, suppose at least one of the four conditions below are satisifed.
{\leqnomode
\begin{equation*}\label{assump:c3}
\tag{C3}
\begin{aligned}
(S_{\infty, Y},G_{\infty, N}) = (S_{0Y},G_{0N}), \qquad
(S_{\infty, Y},f_{\infty, N}) = (S_{0Y}, f_{0N}), \qquad
\\
(\pi_{\infty},G_{\infty, Y},G_{\infty, N}) = (\pi_0, G_{0Y}, G_{0N}) \qquad
(\pi_{\infty},G_{\infty, Y},f_{\infty, N}) = (\pi_{0},G_{0Y},f_{0N}).
\end{aligned}
\end{equation*}}
Then the cross-fitted one-step estimator $\wh S_{n,a_0}(t_0)$ is consistent for $S_{a_0}(t_0)$. Thus, consistency does not require all nuisance functions to be correctly specified simultaneously.
\end{Th}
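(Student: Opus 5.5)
Consistency reduces to showing that the limiting estimating function has mean $S_{a_0}(t_0)$ under each of the four configurations in (C3), together with negligibility of the estimation error. Write $\phi_\infty$ for $\phi^{\text{obs}}_{a_0,t_0}$ evaluated at the limits $(\pi_\infty,G_{\infty,Y},G_{\infty,N},S_{\infty,Y},f_{\infty,N})$; here $Q_\infty$ is the functional of $(S_{\infty,Y},\pi_\infty,G_{\infty,Y},f_{\infty,N})$ obtained by integrating the limiting full-data function against $f_{\infty,N}$. Decompose
\[
\wh S_{n,a_0}(t_0)-S_{a_0}(t_0)=\frac{1}{n}\sum_{k}\sum_{i\in\mathcal V_{n,k}}\bigl\{\wh\phi_{n,k}(O_i)-\phi_\infty(O_i)\bigr\}+(\mathbb P_n-P_0)\phi_\infty+\bigl\{P_0\phi_\infty-S_{a_0}(t_0)\bigr\}.
\]
The middle term is $o_p(1)$ by the law of large numbers, since (C2) makes $\phi_\infty$ bounded. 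For the first term, condition on the training set $\mathcal T_{n,k}$. Within fold $k$, the summand is then an average of i.i.d. terms with conditional mean $P_0(\wh\phi_{n,k}-\phi_\infty)$ and conditional variance at most $P_0(\wh\phi_{n,k}-\phi_\infty)^2/n_k$. Writing $\phi$ as sums and products of $1/\pi$, $1/G_Y$, ratios $S_Y(t_0)/S_Y(u)$, $1/G_N$ and $Q$, all bounded by (C2), gives $P_0(\wh\phi_{n,k}-\phi_\infty)^2\lesssim$ the sum of the $L_2$ errors in (C1), which is $o_p(1)$. The integrals against $dM_{C_N}$ and $\Lambda_{0Y}$ are handled by using the sup-norm control in (C1) and the finite total variation of the counting processes on $[0,t_0]$ and $[0,\tau_N]$. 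By Cauchy–Schwarz and $K$ bounded, the first term is $o_p(1)$.

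The main work is the bias term $P_0\phi_\infty-S_{a_0}(t_0)$, which I will handle in two layers, mirroring the structure in Theorem~\ref{th:EIF_deriv}.

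\emph{Outer layer (the $N$-coarsening).} Let $\phi^F_\infty$ be the full-data function built from $(S_{\infty,Y},\pi_\infty,G_{\infty,Y})$, and condition on $(A,X,\wt Y,\Delta_Y)$. Using the coarsening-at-random condition \eqref{rem:indep_1} and the standard AIPCW identity for one right-censored variable, I will show
\[
E\bigl[\phi_\infty\mid A,X,\wt Y,\Delta_Y\bigr]-E\bigl[\phi^F_\infty\mid A,X,\wt Y,\Delta_Y\bigr]
=\int \Bigl(\tfrac{G_{0N}}{G_{\infty,N}}-1\Bigr)\,(\text{terms in } Q_\infty-Q_0)\,dF .
\]
The right side is a product of the $G_N$ error and the $Q$ error. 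It vanishes if $G_{\infty,N}=G_{0N}$, since the IPCW piece is then unbiased and the augmentation has mean zero. It also vanishes if $Q_\infty$ equals the true conditional mean of $\phi^F_\infty$, which holds when $f_{\infty,N}=f_{0N}$ with the same $(S_{\infty,Y},\pi_\infty,G_{\infty,Y})$ plugged into $\phi^F$. The key computation is a Duhamel/product-integral argument that writes $\Delta_N/G_N(\wt N)-1=-\int dM_{C_N}/G_N$ with the limiting $G$. I expect this to be the main obstacle: the conditional density $f_N$ and the left-truncation structure when $C_N=C_Y$ must be tracked carefully, so that $Q$ is consistent precisely when $f_{\infty,N}$ is correct, given that $S_{\infty,Y},\pi_\infty,G_{\infty,Y}$ enter $\phi^F_\infty$.

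\emph{Inner layer (the full-data part).} Under either choice of $N$-nuisance in (C3), the problem reduces to showing $P_0\phi^F_\infty=S_{a_0}(t_0)$. Here I will invoke the known double robustness of the Westling et al.\ full-data EIF, whose proof for fixed $n$ carries over verbatim with $N$ appended to $X$ by Assumption~\ref{ass:NCO}. That result gives
\[
P_0\phi^F_\infty-S_{a_0}(t_0)=E\Bigl[\tfrac{\pi_0}{\pi_\infty}S_{\infty,Y}(t_0)\int_0^{t_0}\Bigl(\tfrac{G_{0Y}}{G_{\infty,Y}}-\tfrac{\pi_\infty}{\pi_0}\Bigr)\cdots\,d\Bigl(\tfrac{S_{0Y}}{S_{\infty,Y}}\Bigr)\Bigr],
\]
which is zero if $S_{\infty,Y}=S_{0Y}$ or if $(\pi_\infty,G_{\infty,Y})=(\pi_0,G_{0Y})$. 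The $G_{0Y}$ used here is the conditional-on-$(a,x)$ survival of $C_Y$; by Assumption~4 it also equals the survival conditional on $(a,x,n)$, which justifies dropping $n$ from $G_Y$. Taking the two layers together, the four combinations in (C3) are exactly the products of $\{S_Y \text{ correct}\}$ or $\{(\pi,G_Y) \text{ correct}\}$ with $\{G_N \text{ correct}\}$ or $\{f_N \text{ correct}\}$, and this yields consistency.
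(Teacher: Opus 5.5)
Your proposal is correct and follows essentially the paper's proof: the same decomposition into a law-of-large-numbers term, a cross-fitted error term controlled via $L_2$ convergence of $\wh\phi^{\mathrm{obs}}_{n,k,a_0,t_0}$ to $\phi^{\mathrm{obs}}_{\infty,a_0,t_0}$ under (C1)--(C2), and a bias term $P_0\phi_\infty-S_{a_0}(t_0)$. Your two-layer bias argument is the paper's case analysis in Lemma~\ref{lem:MR} recast as a product structure: the outer layer ($G_N$ correct, or $f_N$ correct so that $Q_\infty$ is the true conditional mean of $\phi^F_\infty$) is its $W=G_{0N}/G_N$ computation in Cases 1--2, and the inner Westling-type remainder, which vanishes when $S_Y$ or $(\pi,G_Y)$ is correct, is its Duhamel computation in Cases 1 and 3.
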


The next theorem establishes the conditions under which the cross-fitted, one-step estimator achieves the minimum possible variance among all RAL estimators in the semiparametric model. Importantly, the theorem provides a means of quantifying the variance of the cross-fitted, one-step estimator using the variance of the influence function. The ensuing theorem is proven in \ref{prf:3}. The key condition, Condition \ref{assump:c4} (displayed below), is what permits the use of flexible, data-adaptive nuisance estimators without jeopardizing consistency and asymptotic normality of the final estimator. requires roughly that the rates of convergence of $(S_{n,Y}-S_{0Y})(\pi_n-\pi_0)$, $(S_{n,Y}-S_{0Y})(G_{n,Y}-G_{0Y})$, and $(f_{n,N}-f_{0N})(G_{n,N}-G_{0N})$ to zero be faster than $n^{-1/2}$. This is the key condition that permits the use of flexible, machine-learning nuisance estimators without jeopardizing consistency. The ensuing theorem is proven in \ref{prf:2}.

\begin{Th}[Asymptotic linearity]\label{Th:AL}
Suppose Assumptions 1-6 and regularity Conditions \ref{assump:c1}-\ref{assump:c3} hold. Also, assume Condition C4 holds, which claims that the nuisance estimators converge quickly enough to their true values, such that the following second-order remainder terms are controlled. 
\[
\left\{
\begin{aligned}\label{assump:c4}
r_{a_0,t_0,1}
&\equiv
\max_k
\E\Bigl|
\{\widehat{\pi}_{n,k}(a_0)-\pi_0(a_0)\}
\{\widehat{S}_{n,k,Y}(t_0)-S_{0Y}(t_0)\}
\Bigr|
= o_P(n^{-1/2}),
\\[1ex]
r_{a_0,t_0,2}
&\equiv \max_k \E\Biggl|
\widehat{S}_{n,k,Y}(t_0)
\int_0^{t_0}
\left\{
\frac{G_{0Y}(u)}{\widehat{G}_{n,k,Y}(u)}-1
\right\}
\left(
\frac{S_{0Y}(u)}{\widehat{S}_{n,k,Y}(u)}-1
\right)
(du \mid a_0,X,N)
\Biggr| = o_P(n^{-1/2})
\rlap{\qquad\textnormal{(C4)}},
\\[1ex]
r_{a_0,t_0,3}
&\equiv \max_k \E\Biggl|
\int_0^N
\{\widehat{Q}_{n,k}(u)-Q_k(u)\}
\{d\widehat{\Lambda}_{n,k,C_N}(u)-d\Lambda_{0C_N}(u)\}
\Biggr| = o_P(n^{-1/2}).
\end{aligned}
\right.
\]
Under this additional condition, then
\bse
\sqrt n\{\wh S_{n,a_0}(t_0)-S_{a_0}(t_0)\} =
\frac{1}{\sqrt n}\sum_{i=1}^n\left\{
\phi^{\mathrm{obs}}_{a_0,t_0}(O_i^{\mathrm{obs}})
- S_{a_0}(t_0) \right\} + o_p(1).
\ese
Consequently,
\[
\sqrt n\{\widehat S_{a_0}(t_0)-S_{a_0}(t_0)\}
\rightarrow N(0,\sigma_0^2)
\]
in distribution, where
\[\sigma_0^2 = E_0\left[\{\phi^{\mathrm{obs}}_{a_0,t_0}(O^{\mathrm{obs}})-S_{a_0}(t_0)\}^2\right].
\]
\end{Th}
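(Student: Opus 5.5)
The proof follows the standard cross-fitting template of \citet{Westling_2023} and \citet{chernozhukov_2018}: decompose the estimation error into a leading empirical term, an empirical-process term, and a second-order remainder, then control each. Write $P_0$ for the true law, $\mathbb P_{n,k}$ for the empirical measure on fold $\mathcal V_{n,k}$, and $\phi_0:=\phi^{\mathrm{obs}}_{a_0,t_0}$. Since $\phi_0$ is defined to have mean $S_{a_0}(t_0)$ (by Proposition \ref{prop:id} and the martingale term having mean zero), we have
\[
\wh S_{n,a_0}(t_0)-S_{a_0}(t_0)=(\mathbb P_n-P_0)\phi_0+\sum_{k=1}^K\frac{n_k}{n}(\mathbb P_{n,k}-P_0)\bigl(\wh\phi_{n,k}-\phi_0\bigr)+\sum_{k=1}^K\frac{n_k}{n}\bigl\{P_0\wh\phi_{n,k}-S_{a_0}(t_0)\bigr\}.
\]
The first term gives the displayed linear expansion. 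The claimed normality then follows from the central limit theorem, because $\phi_0$ is bounded under Condition \ref{assump:c2}.

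The empirical-process term is handled by conditioning on the training set $\mathcal T_{n,k}$. Conditional on $\mathcal T_{n,k}$, $\wh\phi_{n,k}$ is a fixed function, so by Chebyshev the term is $O_P\bigl(n^{-1/2}\{P_0(\wh\phi_{n,k}-\phi_\infty)^2\}^{1/2}\bigr)$. Here $\phi_\infty$ is the influence function evaluated at the limits in Condition \ref{assump:c1}, and it equals $\phi_0$ by Condition \ref{assump:c3} together with Theorem \ref{th:MR}. I would bound $P_0(\wh\phi_{n,k}-\phi_\infty)^2$ by a telescoping argument, replacing one nuisance at a time. Each difference is controlled by Condition \ref{assump:c2}, which keeps $1/\wh\pi$, $1/\wh G_Y$ and $1/\wh G_N$ uniformly bounded, and by the $L_2$/sup convergences in Condition \ref{assump:c1}. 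The stochastic integrals $\int_0^{t_0}(\cdot)\,d\Lambda_{Y}$ and $\int(\cdot)\,dM_{C_N}$ need one extra step: integrate by parts, as in \citet{Westling_2023}, to convert them to sup-norm bounds on the integrand ratios. This shows the term is $o_P(n^{-1/2})$.

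The main obstacle is the remainder $P_0\wh\phi_{n,k}-S_{a_0}(t_0)$, which I would split into two layers. First, condition on $(A,X,\wt Y,\Delta_Y)$ and use the coarsening-at-random structure from (\ref{rem:indep_1}). The IPCW-plus-augmentation form of the observed-data influence function gives the exact identity
\[
P_0\wh\phi_{n,k}-P_0\wh\phi^F_{n,k}=\E\int_0^N\Bigl\{\frac{G_{0N}(u)}{\wh G_{n,k,N}(u)}\Bigr\}\{\wh Q_{n,k}(u)-Q_k(u)\}\{d\wh\Lambda_{n,k,C_N}(u)-d\Lambda_{0C_N}(u)\}.
\]
Here $Q_k$ is the conditional mean of the plug-in full-data influence function. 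The derivation uses the Duhamel equation for the ratio $G_{0N}/\wh G_{n,k,N}$, and the $G_{0N}/\wh G$ factor is bounded by Condition \ref{assump:c2}, so this layer is $O_P(r_{a_0,t_0,3})$. Second, with $N$ treated as observed, the full-data remainder $P_0\wh\phi^F_{n,k}-S_{a_0}(t_0)$ is exactly the term analysed in Theorem 2 of \citet{Westling_2023}. Using $\E[I(A=a_0)\mid X]=\pi_0$ and Duhamel's identity for $S_{0Y}/\wh S_{n,k,Y}$, it reduces to a term bounded by $r_{a_0,t_0,1}$ plus the double-integral term $r_{a_0,t_0,2}$.

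By Condition C4, all three remainders are $o_P(n^{-1/2})$, which completes the expansion. The delicate part I anticipate is the first identity. Verifying it requires care when $C_N=C_Y$, where $G_{0N}(\cdot\mid\delta_y=0)$ is degenerate. It also requires care in distinguishing $Q_k$, built from the plug-in full-data influence function, from $Q_0$. Mixing these up would leave a stray first-order term, which must instead cancel through the martingale mean-zero property.
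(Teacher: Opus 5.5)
Your proposal follows the paper's proof in outline. It uses the same three-term cross-fitting decomposition. In the empirical-process term it conditions on the training fold so that $\widehat\phi_{n,k}$ is a fixed function; your Chebyshev step plays the role of the paper's singleton-class maximal inequality. The remainder is handled the same way:
\begin{itemize}
\item write $P_0\widehat T_{1,k}=E\{\widehat W(N)\widehat\phi^F_{n,k}\}$ with $\widehat W=G_{0N}/\widehat G_{n,k,N}$, and expand $\widehat W(N)=1+\int_0^N d\widehat W$;
\item remove the $\widehat\phi^F_{n,k}-Q_k$ piece using the definition of $Q_k$;
\item use $d\widehat W=\widehat W(d\widehat\Lambda_{n,k,C_N}-d\Lambda_{0C_N})$ to reach $r_{a_0,t_0,3}$;
\item hand the full-data remainder $P_0\widehat\phi^F_{n,k}-S_{a_0}(t_0)$ to \citet{Westling_2023}.
\end{itemize}

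One step is justified incorrectly. Condition (C3) with Theorem~\ref{th:MR} (via Lemma~\ref{lem:MR}) gives only $E_0\phi_\infty=S_{a_0}(t_0)$. It does not give $\phi_\infty=\phi_0$ as functions. Suppose $(S_{\infty,Y},G_{\infty,N})=(S_{0Y},G_{0N})$ but $G_{\infty,Y}\neq G_{0Y}$ or $f_{\infty,N}\neq f_{0N}$. Then $\phi_\infty\neq\phi_0$, and $(\mathbb P_{n,k}-P_0)(\widehat\phi_{n,k}-\phi_0)$ is of exact order $n^{-1/2}$, driven by $(\mathbb P_{n,k}-P_0)(\phi_\infty-\phi_0)$. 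So the expansion with influence function $\phi^{\mathrm{obs}}_{a_0,t_0}$ cannot hold in general.

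What the argument needs is $\eta_\infty=\eta_0$, i.e. all nuisance estimators consistent. Then Lemma~\ref{lem:eif-consistency} gives $\max_k\|\widehat\phi_{n,k}-\phi_0\|_2=o_p(1)$. The paper relies on this tacitly too: it imports the $R_{2n}$ bound proved around $\phi_\infty$ in Theorem~\ref{th:MR} and applies it with $\phi^{\mathrm{obs}}_{a_0,t_0}$. You should state it as an assumption rather than derive it from (C3).

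Two smaller points:
\begin{itemize}
\item Your displayed identity has the sign reversed. The computation gives $E\int_0^N\widehat W(u)\{Q_k(u)-\widehat Q_{n,k}(u)\}\{d\widehat\Lambda_{n,k,C_N}(u)-d\Lambda_{0C_N}(u)\}$; this is harmless for the bound.
\item Pulling the bounded factor $\widehat W$ out of a signed integral to reach $r_{a_0,t_0,3}$ as defined is not automatic. You need either the weight inside $r_{a_0,t_0,3}$ or a bound using $|d\widehat\Lambda_{n,k,C_N}-d\Lambda_{0C_N}|$. The paper glosses over this as well.
\end{itemize}
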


Together, these two results develop the asymptotic theory underlying the use of the cross-fitted, one-step estimator adjusting for the NCO event time for inference on the treatment-arm-specific survival function for the primary outcome.

\subsection{Nuisance parameter specification and estimation}

This section discusses estimation of the nuisance functions that the EIF and cross-fitted, one-step estimator depend on. We primarily focus on modeling two nuisance functions that together implicitly define a joint distribution of the primary and NCO event times, which poses several challenges.

The first nuisance function --- the treatment propensity score $\pi_0(a\mid x)$ --- is known by design in a randomized trial and can therefore be estimated consistently using logistic regression. We advocate estimating the remaining nuisances using cross-fitted SuperLearner ensembles \citep{vanderLaanetal2007,Westling_2023}, which combine predictions from a library of base learners. For example, we recommend estimating the censoring survival functions $G_{0Y}(t\mid a,x)$ and $G_{0N}(t\mid a,x,\widetilde Y,\Delta_Y)$ with SuperLearner ensembles containing parametric (e.g., Weibull), semiparametric (e.g., Cox and Generalized Additive Cox models), and nonparametric survival models (e.g., random survival forests).

The main focus of this section involves estimating the remaining two nuisance functions, which each describe the conditional distribution of one event time given the other:
\[
f_{0N}(n\mid a,x,\widetilde{Y},\Delta_Y)
\quad\text{and}\quad
S_{0Y}(y\mid a,x,n).
\]

One challenge is that estimating these quantities separately can lead to estimated conditional distributions that do not correspond to any valid joint distribution of $(Y,N)\mid(A,X)$, a phenomenon referred to as incompatibility \citep{Arnold1989-xx,Chen2010-as}. To avoid incompatibility, we instead model the joint distribution through the one-way factorization of the joint likelihood
\[
f_N(n\mid a,x) \times S_{0Y}(y\mid a,x,n).
\]
Parameterizing the joint distribution in this way guarantees compatibility by design. However, it is yet unclear how the one-way factorization helps specify a model for one of the target nuisances, $f_{0N}(n\mid a,x,\widetilde{Y},\Delta_Y)$. In the proposition below, we show that this desired nuisance function can be derived from the components of the one-way factorization using Bayes rule (see Appendix \ref{prf:prop2} for a proof).

\begin{prop}[Deriving target nuisance under one-way factorization]\label{prop:one_way}
    We can express the target nuisance $f_{0N}(n \mid a, x, \wt{y}, \delta_y)$ in terms of the one-way factorization $f_{N}(n \mid a, x)$ and $S_{0Y}(y \mid a, x, n)$ via Bayes rule.
    \begin{align*}
    f_{0N}(n \mid a, x, \wt{y}, \delta_y) &= f_N(n \mid a, x) \cdot \frac{L(\wt{y}, \delta_y \mid a, x, n)}{L(\wt{y}, \delta_y \mid a, x)}
    \end{align*}
    where
    \begin{align*}
    \frac{L(\wt{y}, \delta_y \mid a, x, n)}{L(\wt{y}, \delta_y \mid a, x)} &=\left[\delta_y \left(\frac{f_{0Y}(\wt{y} \mid a, x, n)}{f_{Y}(\wt{y} \mid a, x)}\right) + (1-\delta_y) \frac{S_ {0Y}(\wt{y} \mid a, x, n)}{S_Y(\wt{y} \mid a, x)}\right].
    \end{align*}
    Note that $S_Y(u \mid a, x) = \int_0^{\infty} S_{0Y}(u \mid a, x, n) f_{N}(n \mid a, x) dn$ by tower law. $f_{0Y}$ and $f_{Y}$ refer to the conditional density functions associated with $S_{0Y}$ and $S_Y$, which can be obtained from the survival functions using the following identity $f(x) = - (d/du) S(u) \mid_{u=x}$. 
\end{prop}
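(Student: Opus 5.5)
The argument is a direct application of Bayes' rule to the observed pair $(\wt Y,\Delta_Y)$, conditional on $(A,X)$. The key preliminary fact is that $(\wt Y,\Delta_Y)$ has a likelihood which, for fixed $(a,x,n)$, factors into a term depending only on the law of $Y$ given $(a,x,n)$ and a term depending only on the censoring law of $C_Y$ given $(a,x)$. The censoring terms will then cancel from the ratio. Throughout, $f_N(n\mid a,x)$ denotes the conditional density of $N$. By Assumption 5 and strong ignorability (Assumption 3), $f_N(n\mid a,x)$ does not in fact depend on $a$, but we do not need this fact.

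\textbf{Step 1 (joint density).} Fix $(a,x)$ and write the joint density of $(N,\wt Y,\Delta_Y)$ given $A=a,X=x$. Assumption 4 gives $C_Y\perp (Y,N)\mid (A,X)$. Hence, conditional on $(a,x,n)$, $C_Y$ is independent of $Y$ and has the law $G_{0Y}(\cdot\mid a,x)$. The standard right-censoring calculation then gives
\begin{align*}
L(\wt y,\delta_y\mid a,x,n)=\bigl\{f_{0Y}(\wt y\mid a,x,n)\,G_{0Y}(\wt y\mid a,x)\bigr\}^{\delta_y}\bigl\{S_{0Y}(\wt y\mid a,x,n)\,g_{0Y}(\wt y\mid a,x)\bigr\}^{1-\delta_y},
\end{align*}
where $g_{0Y}$ is the density (or mass) of $C_Y$. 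Any tie conventions between $\le$ and $<$ enter only through the censoring factor, and so they will cancel in Step 3. Multiplying by $f_N(n\mid a,x)$ gives the joint density of $(N,\wt Y,\Delta_Y)$ given $(a,x)$.

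\textbf{Step 2 (marginalize over $n$).} Integrate the joint density from Step 1 over $n$. By the tower law, $\int f_{0Y}(\wt y\mid a,x,n)f_N(n\mid a,x)\,dn=f_Y(\wt y\mid a,x)$ and $\int S_{0Y}(\wt y\mid a,x,n)f_N(n\mid a,x)\,dn=S_Y(\wt y\mid a,x)$. The first of these follows from the second by differentiating under the integral, which requires mild dominated-convergence regularity. This yields $L(\wt y,\delta_y\mid a,x)$, which has the same form as in Step 1 with $f_Y$ and $S_Y$ in place of $f_{0Y}$ and $S_{0Y}$, and with the identical censoring factors.

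\textbf{Step 3 (Bayes and cancellation).} Bayes' rule gives $f_{0N}(n\mid a,x,\wt y,\delta_y)=f_N(n\mid a,x)L(\wt y,\delta_y\mid a,x,n)/L(\wt y,\delta_y\mid a,x)$. In the ratio the censoring factors $G_{0Y}$ or $g_{0Y}$ cancel; this uses positivity from Assumption 4 to ensure the denominator is nonzero. Because $\delta_y\in\{0,1\}$, the exponent form can be rewritten as $\delta_y\,f_{0Y}/f_Y+(1-\delta_y)\,S_{0Y}/S_Y$, which is the stated expression.

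\textbf{Main obstacle.} The only delicate step is the cancellation in Step 3. It requires that the censoring factor not depend on $n$, which is exactly the content of $C_Y\perp N\mid (A,X)$ in Assumption 4. If $C_Y$ were allowed to depend on $N$, the formula would fail, so the proof should state explicitly that this independence is being used.
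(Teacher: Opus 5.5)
Your proposal is correct and is essentially the paper's argument. The paper first uses independent censoring to drop the censoring events from the conditioning set, reducing to $f(n\mid a,x,Y=\wt y)$ and $f(n\mid a,x,Y>\wt y)$, and then applies Bayes' rule separately for $\delta_y=1$ and $\delta_y=0$; you instead keep the factors $G_{0Y}$ and $g_{0Y}$ in the observed-data likelihood and cancel them in the ratio, which is the same step made explicit (the denominator is automatically nonzero on the support of $(\wt Y,\Delta_Y)$, so the positivity part of Assumption 4 is not actually needed there).
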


Hence, $f_{0N}(n \mid a, x, \wt{y}, \delta_y)$ can be written as $f_{N}(n \mid a, x)$ tilted by the likelihood ratio of $Y$ induced by conditioning on $(N,A,X)$ relative to conditioning on $(A,X)$ alone. When $(Y \perp N \mid A, X)$, then $L(\wt{y}, \delta_y \mid a, x, n)/L(\wt{y}, \delta_y \mid a, x) =  1$ and $f_{0N}(n \mid a, x, \wt{y}, \delta_y) = f_{N}(n \mid a, x)$. When $Y$ and $N$ are associated through unmeasured causes, the tilt deviates from 1 and re-weights values of $N$ that make the observed data $(a,x,\wt{y}, \delta_y)$ more likely. 

We note that expressing $f_{0N}(n \mid a, x, \wt{y}, \delta_y)$ through the one-way factorization induces dependence between $S_{0Y}$ and $f_{0N}$, so that these nuisance functions are no longer variation independent. As a result, not all nuisance-function combinations appearing in the multiple robustness result of Theorem \ref{th:MR} can be achieved independently under this parameterization. Consequently, the set of nuisance configurations yielding consistency may be smaller than that stated in Theorem \ref{th:MR}. Nevertheless, the resulting estimator remains doubly robust and retains the same rate-double-robustness property required for asymptotic linearity in Theorem \ref{Th:AL}.

The remaining challenge is how to model $S_{0Y}(y \mid a, x, n)$, which is complicated by the fact that the outcome $Y$ and key regressor $N$ may be only partially observed due to censoring. A naive strategy is to restrict estimation to ``NCO-complete'' cases (i.e., $\Delta_N=1$). However, this can induce selection bias whenever $N \not\perp Y \mid (A,X)$, since the complete cases need not be representative of the target population. In addition, this approach discards information on the primary outcome ($Y$), which is particularly inefficient when event rates are low.

To address this issue, we propose an EM algorithm that alternates between filling in information about the right-censored NCO event time and obtaining interim estimates of key nuisance parameters. The algorithm is depicted visually in Figure \ref{fig:tables_with_arrow}.

\begin{figure}[ht]
\setlength{\extrarowheight}{3pt}
    \centering
    \resizebox{\textwidth}{!}{
    \begin{tikzpicture}[
        % Define a style for the tables as TiKZ nodes
        table_node/.style={
            draw, 
            inner sep=0pt, 
            align=center
        },
        % Define a style for the arrows
        arrow_style/.style={
            -{Stealth}, 
            thick, 
            black
        }
    ]

    % Node 1: First table

    \node[
    draw,
    rounded corners,
    fill=gray!10,
    inner sep=8pt](TableA){
    \begin{tabular}{|c|c|c|c|c|}
            \hline
            \textbf{i} & $A$ & $(\wt{Y}, \Delta_{Y})$ & $(\wt{N}, \Delta_{N})$ & $X$ \\
            \hline
            1 & 1 & (111, 0) & (70, 1) & $X_1$\\
            2 & 1 & (106, 0) & (21, 0) & $X_2$\\
            3 & 1 & (47, 1) & (42, 1) & $X_3$\\
            4 & 1 & (39, 0) & (39, 1) & $X_4$\\
            \hline
        \end{tabular}
    };

    % Node 2: Second table, positioned to the right of the first
    \node[
    draw,
    rounded corners,
    fill=blue!03,
    right=1.5cm of TableA,
    inner sep=8pt] (TableB) {
        \begin{tabular}{|c|c|c|c|c|c|c|}
            \hline
            \textbf{i} & \textbf{j} & $A$ & $(\wt{Y}, \Delta_{Y})$ & $N$ & $\wh{w}^{(0)}$ & $X$ \\
            \hline
            1 & 1 & 1 & (111, 0) & 70 & 1 & $X_1$\\
            2 & 1 & 1 & (106, 0) & 39 & $\wh{w}^{(0)}_1$ & $X_2$\\
            2 & 2 & 1 & (106, 0) & 42 & $\wh{w}^{(0)}_2$ & $X_2$\\
            2 & 3 & 1 & (106, 0) & 70 & $\wh{w}^{(0)}_3$ & $X_2$\\
            3 & 1 & 1 & (47, 1) & 42 & 1 & $X_3$\\
            4 & 1 & 1 & (39, 0) & 39 & 1 & $X_4$\\
            \hline
        \end{tabular}
    };

    % Node 2: Second table, positioned to the right of the first
    \node[
    draw,
    rounded corners,
    fill=blue!10,
    right=1.5cm of TableB,
    inner sep=8pt] (TableC) {
        \begin{tabular}{|c|c|c|c|c|c|c|}
            \hline
            \textbf{i} & \textbf{j} & $A$ & $(\wt{Y}, \Delta_{Y})$ & $N$ & $\wh{w}^{(1)}$ & $X$ \\
            \hline
            1 & 1 & 1 & (111, 0) & 70 & 1 & $X_1$\\
            2 & 1 & 1 & (106, 0) & 39 & $\wh{w}^{(1)}_1$ & $X_2$\\
            2 & 2 & 1 & (106, 0) & 42 & $\wh{w}^{(1)}_2$ & $X_2$\\
            2 & 3 & 1 & (106, 0) & 70 & $\wh{w}^{(1)}_3$ & $X_2$\\
            3 & 1 & 1 & (47, 1) & 42 & 1 & $X_3$\\
            4 & 1 & 1 & (39, 0) & 39 & 1 & $X_4$\\
            \hline
        \end{tabular}
    };

    % Draw the arrow between the nodes
    % Connects the right side of TableA to the left side of TableB
    \draw[arrow_style](TableA.north east)to[out=65, in=135] node[
    midway,
    above=8pt,
    fill=white,
    inner sep=2pt]{$\displaystyle \wh f_N(t \mid N>\wt N, A, X)$}(TableB.north west);
    
    \draw[arrow_style]
    (TableB.north east)
    to[out=45, in=135]
    node[
    midway,
    above=8pt,
    fill=white,
    inner sep=2pt]
{$\displaystyle \wh f_N(t \mid N>\wt N, A, X, \wt Y, \Delta_Y)$}
(TableC.north west);
    
    \end{tikzpicture}}
    \caption{A visual depiction of the EM algorithm used to estimate $S_{0Y}(y \mid a, x, n)$.}
    \label{fig:tables_with_arrow}
\end{figure}

The first stage of the EM algorithm (the E-step) replaces observations with right-censored NCO event times with complete NCO event times using a dataset augmentation strategy inspired by the redistribute-to-the-right algorithm \citep{Efron_1967, Tanner1987-ex, taylor_survival_2002}. The augmentation expands each NCO-censored observation ($\Delta_N=0$) into a collection of pseudo-observations corresponding to observed event times exceeding the censoring time. Then, each pseudo-observation is assigned a weight that redistributes the original observation’s mass across the augmented support of $N$. In the first pass, we use working redistribution weights based on the conditional distribution $N \mid (A,X)$, which is obtained from the first component of the one-way factorization:
\[
\wh w_j^{(0)} = \wh{f}_{N}(N_{j} \mid A, X, N_{j}>\wt{N}) = 
\frac{\wh f_N(N_j \mid A,X)}{\wh S_N(\wt N \mid A,X)}.
\]

In the second stage of the EM algorithm (the M-step), we estimate the desired nuisance $S_{0Y}$ via a redistribution-weighted regression of $Y$ on $(A,X,N)$ on the augmented dataset. A first pass of the weighted regression using $\wh w^{(0)}$ yields a pilot estimator $\wh{S}_{Y}^{(0)}$. Because the initial redistribution weights are based only on the conditional distribution $N \mid (A,X)$, they do not exploit the information about NCO event time $N$ contained in the primary outcome $Y$. Consequently, a single EM iteration is generally insufficient. Instead, we can update the redistribution weights using the full conditional distribution $N \mid (A,X,Y)$, after which the E- and M-steps are iterated until convergence. Concretely, we can update the redistribution weights for NCO-censored observations using the current estimate of the posterior distribution of $N\mid(A,X,\wt Y,\Delta_Y)$:
\[
\wh w_j^{(m+1)} \propto
\wh L^{(m)}(\wt Y,\Delta_Y \mid A,X,N_j) \cdot
\wh f_N(N_j \mid A,X, N_j > \wt{N}).
\]
Where $\wh{L}^{(m)}$ is the estimated likelihood defined above based on the current estimate of the conditional survival function $\wh{S}^{(m)}_{Y}$ (see \ref{prop:one_way}). Then we obtain an updated estimator $\wh{S}_{0Y}^{(m+1)}$ by fitting a weighted regression of $Y$ on $(A,X,N)$ using updated weights $\wh w^{(m+1)}$. This procedure is iterated until convergence or until the relative change in the weights satisfies
\[
d_{\mathrm{rel}}^{(m)} :=
\left(
\frac{\sum_{j\in\mathcal J} (\wh w_j^{(m+1)} - \wh w_j^{(m)})^2}
{\sum_{j\in\mathcal J} (\wh w_j^{(m)})^2}
\right)^{1/2}
< \gamma,
\]
for some tolerance threshold $\gamma$ and where $\mathcal{J}$ indexes the pseudo-observations requiring redistribution.

\section{Numerical Experiments}

We simulated data from randomized trials of $n=\{2400, 4000\}$ participants with $\pi=0.5$ probability of randomization to intervention/placebo. We considered trials where the intervention's prevention efficacy (PE) against the primary infection was $25\%$ ($\text{logRR}=\log\{0.75\}$). We assumed that the study administratively censored all infection outcomes at two years in accordance with the HVTN 704/HPTN 085 study \citep{corey_two_2021}. We will summarize the performance of estimators of the log risk ratio ($\log\{\text{RR}\}$) based on the mean relative bias, median relative efficiency (compared to unadjusted estimator), and mean empirical confidence interval coverage of at the largest primary event time.

We simulated $\{Y(1), Y(0), C(1), C(0), N, X\}$ from a multivariate normal copula model. We assumed intercorrelation between $Y(1)$, $Y(0)$, and $N$ of $\rho \in \{0, 0.5, 0.8\}$, reflecting cases where the NCO event time was unassociated, moderately associated, and highly associated with primary infection times. We assumed the potential dropout times $C(0)$ and $C(1)$ were not intercorrelated with each other nor all other time-to-event variables, thereby satisfying the censoring at random assumption. We considered cases where $X$ was uncorrelated with all time-to-event variables ($\rho_X = 0$). We assumed that $Y(a)$ had a marginal exponential distribution with annual incidence of either $6\%$ in the placebo arm ($\lambda_{Y(a)} = -\log(1-0.06 \cdot [1+a \cdot \text{RR}])$). We assumed that $C(a)$ had a marginal exponential distribution with annual incidence of $\lambda_C = 10\%$ in both arms, reflecting instances of moderate to high censoring. We assumed that $N$ had a marginal exponential distribution with annual incidence of $12\%$ ($\lambda_N = -\log(1-0.12)$). We assumed a standard normal marginal distribution of baseline covariate $X$.

As benchmarks, we considered estimators of $\log \text{RR}(t_0)$ based on unadjusted, na\"ive Kaplan-Meier curves and estimators that only adjusted for the fully observed, weakly prognostic baseline covariate $X$ \citep{Westling_2023}. We evaluated two versions of our estimator. The first was an ``Oracle" estimator which replaced the unknown nuisance parameters with true values, which were computed using highly stratified approximations computed from an external dataset with 5 million observations generated from the same model as the trial data. The latter version was the proposed cross fitted (CF) estimator, which used a logistic regression model to estimate $\pi_0(a \mid x)$ and used cross-fitted SuperLearner ensembles to estimate all other nuisances.\citep{vanderLaan_2007, Westling_2023} Specifically, we used 6-fold cross-fitted SuperLearner ensembles to estimate $f_{N}(n \mid a, x)$, and $G_{0Y}(t \mid a, x)$ with Kaplan-Meier, Exponential, Cox, and generalized additive Cox models as base learners. To estimate $S_{0Y}(y \mid a, x, n)$, we used a 6-fold cross-fitted SuperLearner with Cox, generalized additive Cox, and random forest base learners applied to the augmented dataset with redistribution weights as described above. We conducted iterative EM-style updates of our estimator until the augmented dataset weight vectors updates were small ($d_{\text{rel}}^{(m)} < 10 ^{-4}$) or 5 iterations were reached, whichever came first.

The results of the simulation are shown in Table \ref{tab:sim_rho_2}. All estimators showed low bias ($<2.5\%$ average relative bias) for the target parameter and achieved near nominal confidence interval coverage. As $N$ became more prognostic for $Y$, adjusting for $N$ led to greater efficiency gain. When $N$ was moderately prognostic for $Y$ ($\rho=0.5$), the efficiency gain was $8-9\%$. When $N$ was highly prognostic for $Y$, the efficiency gain was approximately $33\%$ and relative bias decreased. Hence, our proposed estimators exhibited strong performance on par with the benchmark estimators when $N$ is not prognostic, but enhance efficiency when $N$ is prognostic for $Y$. In other words, adjusting for the NCO event time costs nothing when it turns out to be uninformative, and buys precision gain that scales with how strongly the NCO event time is associated with the primary outcome.

\begin{table}[ht]
\centering
\caption{Simulation results at landmark time $t_0$ based on 1000 replications.}
\label{tab:sim_rho_2}
\resizebox{\textwidth}{!}{\begin{tabular}{llccccccccc}
\toprule
& & \multicolumn{3}{c}{$\rho = 0.0$} 
& \multicolumn{3}{c}{$\rho = 0.5$} 
& \multicolumn{3}{c}{$\rho = 0.8$} \\
\cmidrule(r){3-5} \cmidrule(r){6-8} \cmidrule(r){9-11}
$n$ & Method
& Rel. Bias (\%) & Rel. Eff. & Cov (\%) 
& Rel. Bias (\%) & Rel. Eff. & Cov (\%) 
& Rel. Bias (\%) & Rel. Eff. & Cov (\%) \\
\midrule
\multirow{4}{*}{$2000$} & KM & 1.680 & 1.000 & 94.9 & 1.540 & 1.000 & 94.1 & 2.230 & 1.000 & 94.4\\
& X-adj & 1.690 & 1.000 & 95.1 & 1.520 & 1.000 & 93.9 & 2.190 & 1.000 & 94.3\\
& NCO-adj (Oracle) & 1.710 & 0.995 & 95.1 & 2.080 & 0.911 & 93.9 & 0.528 & 0.669 & 94.7\\
& NCO-adj (CF) & 1.760 & 0.998 & 94.8 & 1.920 & 0.922 & 94.4 & 0.425 & 0.689 & 94.2\\
\\[1ex]
\multirow{4}{*}{$4000$} & KM & -0.480 & 1.000 & 95.4 & -0.574 & 1.000 & 94.7 & 0.338 & 1.000 & 94.4\\
& X-Adj & -0.534 & 1.000 & 95.6 & -0.569 & 1.000 & 94.7 & 0.324 & 1.000 & 94.4\\
& NCO-adj (Oracle) & -0.588 & 0.997 & 95.4 & -0.721 & 0.912 & 94.6 & -0.124 & 0.671 & 94.2\\
& NCO-adj (CF) & -0.581 & 0.998 & 95.0 & -0.715 & 0.921 & 93.9 & -0.119 & 0.687 & 93.6\\
\bottomrule
\end{tabular}}
\end{table}

\section{Case Study: a randomized trial of a broadly neutralizing antibody for HIV-1 prevention}

In the following section, we apply our methods to the HVTN 704/HPTN 085 (AMP) study \citep{corey_two_2021}, which was a randomized, double-blinded trial evaluating the prevention efficacy of a broadly neutralizing antibody (bNab), VRC01 against HIV-1 infection among 2701 men who have sex with men and transgender women in North America, South America, and Europe. HVTN 704/HPTN 085 Participants were randomized in a 1:1:1 ratio to receive low-dose VRC01, high-dose VRC01, and placebo atop standard of prevention. The primary outcome was serologically confirmed HIV-1 infection at 80 weeks of follow up, and HIV-1 testing was conducted every 4 weeks. Participants also underwent sexually transmitted infection testing at enrollment and every 6 months or after presentation of symptoms. Sexually transmitted infection testing included gonorrhea (oropharyngeal, urethral, and rectal), chlamydia (urethral and rectal), and syphilis (serologic) \citep{edupuganti_feasibility_2021}.

As a first step, we will evaluate if each STI is a valid NCO event time for HIV-1 infection. Second, we will explore whether adjustment for time-to-STI can improve the precision of the treatment effect of VRC01 against HIV-1. We focused on two candidate NCO event times---time-to-first rectal gonorrhea (RGC) and syphilis infections---as these were previously suggested as factors independently associated with higher likelihood of HIV-1 \citep{Craib1995-oo, Vaughan2015-mx, Barbee2017-ak, Wu2021-lk}.

\subsection{Rectal Gonorrhea and Syphilis are valid negative controls for HIV-1 in HVTN 704/HPTN 085}

We empirically checked if RGC and other bacterial STIs were valid NCO event times for HIV-1 (Assumption 5) by comparing estimated cumulative incidence functions between the placebo and pooled VRC01 arms (low dose plus high dose). We fit cumulative incidence functions separately by randomization arm while adjusting for age (in years), racial/ethnic underrepresented minority (URM) status, and a baseline HIV-1 risk score using the method of \citeay{Westling_2023} to improve precision and relax assumptions by allowing independent censoring conditional on covariates.

As shown in Figure \ref{fig:ncovalid}, we observe that there is no discernible effect of VRC01 administration on incidence of RGC over 72 weeks. Moreover, we observed that binary RGC infection status was associated with higher cumulative incidence of HIV-1 infection over follow-up. In particular, placebo recipients with an incident RGC infection appeared at approximately 50\% higher relative risk of acquiring HIV-1 over 80 weeks compared to placebo recipients who avoided RGC infection. These results supported that RGC was a compelling NCO event time for HIV-1, as it was both unaffected by treatment yet positively associated with HIV-1. Results were similar for syphilis, another STI previously proposed as a marker of HIV-1 exposure. Hence, both outcomes appeared to promising choices of NCO event times that could plausibly lead to precision gain in the evaluating whether VRC01 could prevent HIV-1 infection.

\begin{figure}[H]
 \centering
  \begin{subfigure}[b]{0.8\textwidth}
    \includegraphics[width=\textwidth]{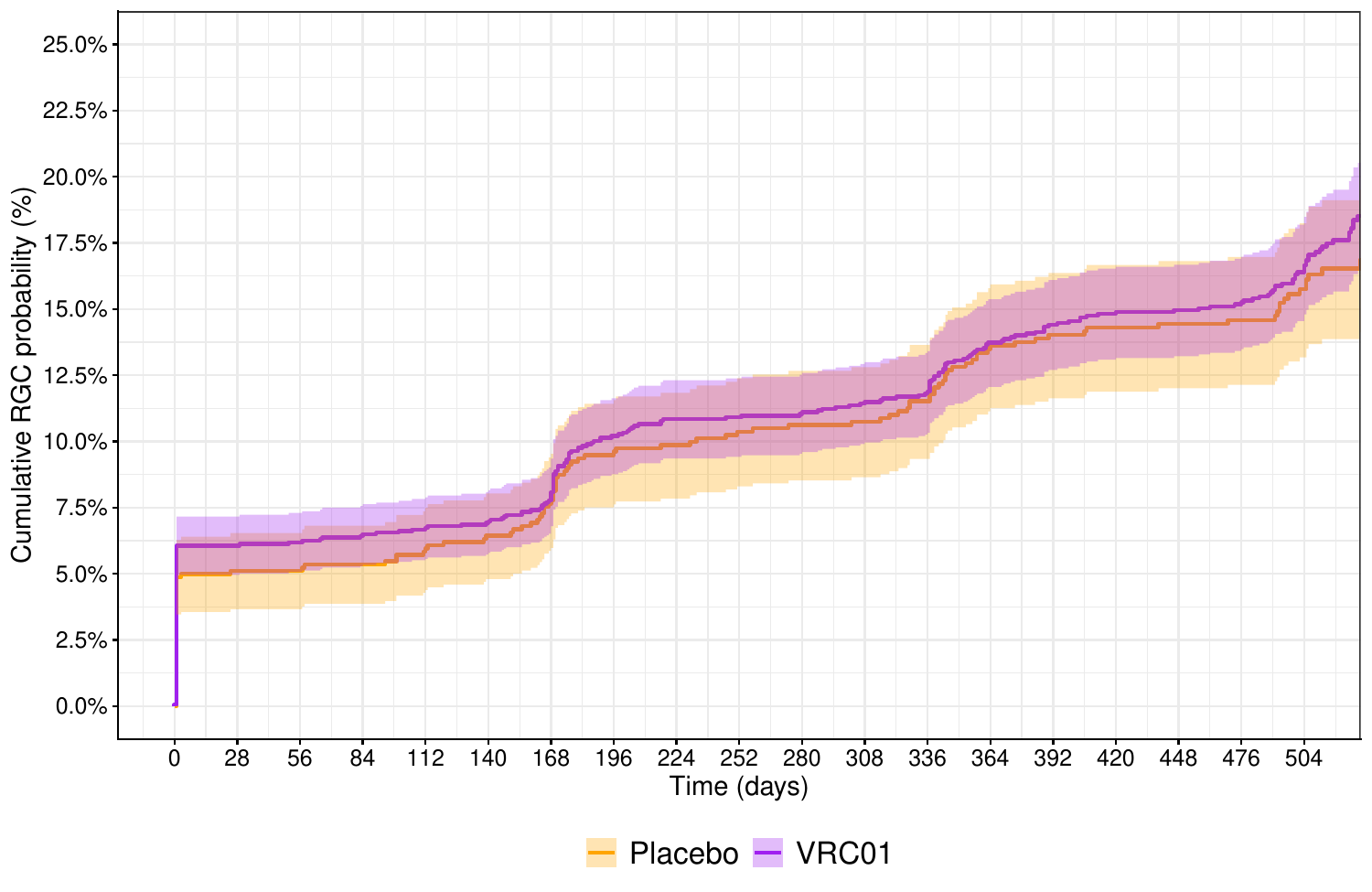}
    \label{fig:fig1}
  \end{subfigure}
  \hfill
  \begin{subfigure}[b]{0.85\textwidth}
    \includegraphics[width=\textwidth]{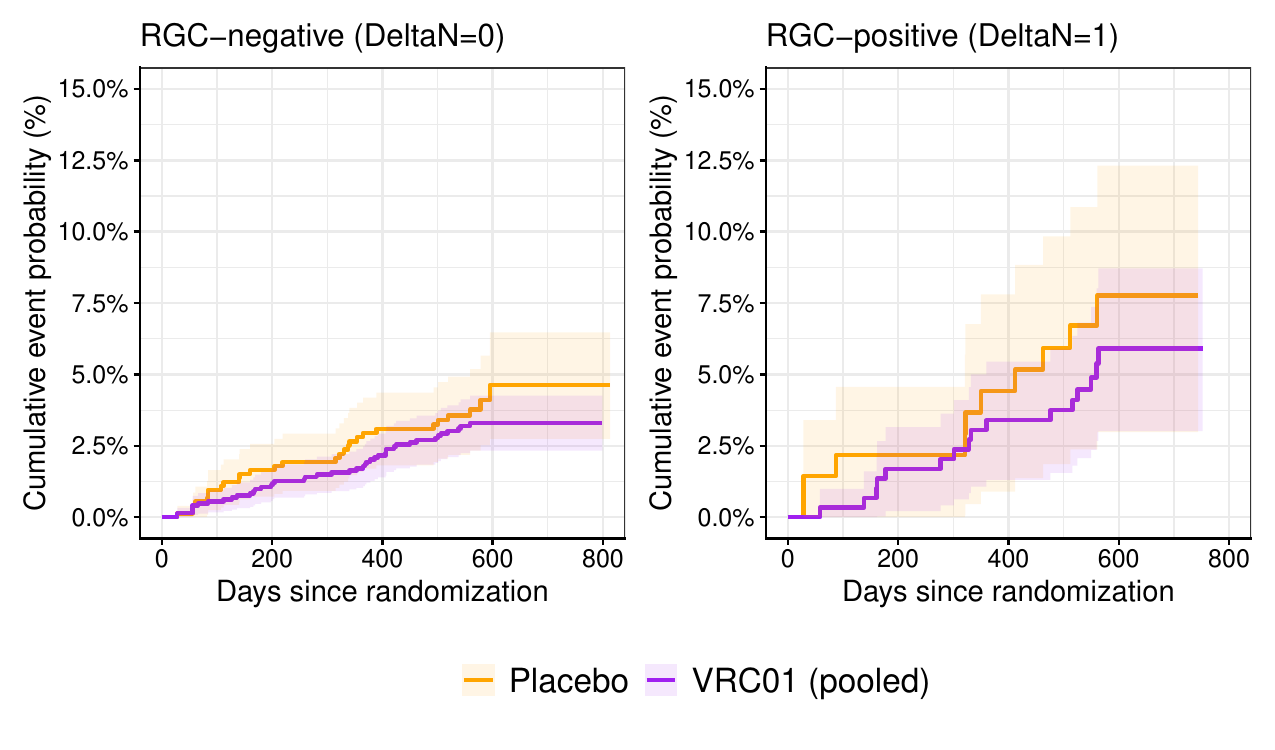}
    \label{fig:fig2}
  \end{subfigure}
  \caption{Top: cumulative incidence functions for Rectal Gonorrhea (RGC) over 72 weeks as a function of randomization to placebo or VRC01 (low or high dose) among the modified intention to treat efficacy cohort with pointwise confidence intervals. Right: ratio of cumulative incidence functions of HIV-1 over 80 weeks of follow-up stratified by RGC-infection status ($\Delta_N=0$ or $\Delta_N=1$).}
    \label{fig:ncovalid}
\end{figure}

\subsection{Adjustment for STIs improves precision in estimated prevention efficacy of VRC01 against HIV-1}

We investigated whether adjustment for time-to-incident STIs could improve precision of prevention efficacy estimates of VRC01 in HVTN 704/HPTN 085. As a benchmark, we estimated the logRR and PE using (i) an unadjusted estimator based on Kaplan-Meier estimators and (ii) a one-step estimator as described in \citeay{Westling_2023} which adjusted for a baseline HIV-1 risk score estimated using machine learning \citep{corey_two_2021}. We compared the benchmark estimators to an estimator adjusting for time-to-first rectal gonorrhea (RGC) as the key adjustment variable. We used a logistic regression model to estimate the treatment propensity and an 8-fold cross-fitted SuperLearner to estimate each remaining nuisance. The density $f_{N}(n \mid a, x)$ was estimated using an ensemble of exponential, Weibull, Cox, and additive Cox models. The survival function $S_{0Y}(y \mid a, x, n)$ was estimated using an ensemble of Cox, additive Cox, and survival random forest. We used the EM algorithm based on data augmentation to estimate $S_{0Y}$ with a maximum of $5$ iterations of the EM algorithm per fold. All censoring distributions were estimated using an ensemble of Kaplan-Meier, Exponential, Weibull, Cox, additive Cox, and random forest base learners. All outcomes were summarized at the week 80 visit, consistent with the primary analysis.

All estimators supported that the pooled VRC01 group had moderately lower incidence of HIV-1 compared to the placebo arm. While adjusting for the baseline risk score ($X$) only achieved a modest 2.5\% reduction in the estimated variance of the target estimand, our proposed estimator which adjusted for time-to-first RGC infection exhibited a 26.9\% variance reduction relative to the unadjusted analysis. Similarly, adjusting for time-to-first syphilis infection reduced the estimated variance by 26.5\%. Hence, while baseline covariate adjustment led to very small precision gains, adjusting for RGC and Syphilis as NCO event times led to much greater precision gains.

\begin{table}[h]
    \centering
    \begin{tabular}{ccccc}
        \hline
        Estimator & logRR & Est. Var & Relative Eff. & PE ($95\%$ CI) \\
        \hline
        Kaplan-Meier & -0.309 & 4.60e-2 & 1 & $26.6\% (-11.7\%, 51.8\%)$ \\
        $X$-adjusted & -0.294 & 4.49e-2 & 0.975 & $25.4\% (-12.9\%, 50.8\%)$\\
        RGC-adjusted & -0.248 & 3.36e-2 & 0.731  & $22.0\% (-11.7\%,45.6\%)$\\
        Syph-adjusted & -0.261 & 3.38e-2 & 0.735 & $23.0\% (-10.5\%,46.3\%)$\\
        \hline
    \end{tabular}
    \caption{Comparison of different estimators of VRC01 prevention efficacy in HVTN 704/HPTN 085. Estimates are reported at Week 80.}
    \label{tab:HVTN 704/HPTN 085_results}
\end{table}

\section{Discussion}

We proposed an estimator of the treatment-arm-specific survivor function --- and so of prevention efficacy --- that adjusts for a negative control outcome (NCO) event time when both the NCO event time and the primary outcome are right-censored. We derived the efficient influence function for this doubly-right-censored setting and used it to construct a cross-fitted, one-step estimator that is multiply robust and asymptotically efficient, together with an EM algorithm based on data augmentation and redistribute-to-the-right for the hardest of the required nuisance functions. In simulations, the estimator was consistent and well-calibrated, and its precision gain grew with how strongly the NCO event time was associated with the primary outcome.

Applying our method to the HVTN 704/HPTN 085 study, we observed that adjusting for time-to-first rectal gonorrhea or time-to-first syphilis infection reduced the estimated variance of VRC01's efficacy estimate by roughly 27\%, compared to a 2.5\% reduction from the best available baseline covariate. This practical result suggests that adjusting for infections caused by pathogens unrelated to the vaccine's target could lead to more precise and powerful assessments of prevention efficacy without adjusting the trial's design, enrollment, or budget.

Our work contributes to the growing statistical literature on regression with censored covariates,\citep{May2011-yh, Wang2012-yq, Bernhardt2015-no, Atem2016-zb, Atem_2017, Ding2018-oe, Atem_2019, Matsouaka_2020, Chen2022-lx, Ashner2024-sh, Lotspeich2024-wm} particularly those that view the problem from the lens of semiparametric theory in coarsened data models \citep{Bickel_semi_1993, Robins1994-hv, Tsiatis2006-cc, Lee2024-hi, Vazquez2025-kq, Zhang2025-vy, Vazquez2024-dk}. To our knowledge, our work is the first to consider the problem of a right-censored adjustment variable and a right-censored primary outcome jointly, in the context of infectious disease prevention trials. Our approach extends that of \citeay{Etievantetal2023} which proposed adjusting for binary indicators of observed ``vaccine untargeted" HPV infections in HPV vaccine trials. However, our approach more naturally accommodates the typical time-to-event nature of infection endpoints in prevention trials, avoids information loss by binarization, and eschews the need for a completely random censoring assumption.

Two limitations of our approach are worth stating plainly, alongside what the method still delivers in each case. First, the size of the precision gain is a property of the trial's design rather than of the estimator: trials with short follow-up or infrequent, non-symptom-prompted NCO testing will see smaller gains, because there is less information in the NCO event time to exploit. Even in that setting, however, our simulations show the estimator performs no worse than an unadjusted analysis when the NCO assumption is satisifed, which suggests that adjusting for the NCO event time costs nothing in large sample practical problems. Second, the negative control assumption --- that the NCO event time is unaffected by the intervention --- is a strong assumption and adjustment can lead to bias if it fails to hold. However, the NCO assumption makes a falsifiable prediction that can be checked directly in the data: no effect of the intervention on incidence of the candidate NCO event time. We checked this prediction for HVTN 704/HPTN 085 and found no discernible effect of VRC01 on rectal gonorrhea incidence. Moreover, the antigen-specific nature of the immune response elicited by interventions like vaccines and passively administered antibodies provides very strong biological rationale for the NCO assumption. In our view, the combination of a testable prediction and mechanistic plausibility makes the NCO assumption reasonable in many cases.

In future work, it may be of interest to extend our framework to accommodate multivariate and/or recurrent right-censored NCO endpoints. In our developments, we assume a single, time-to-first NCO event is measured in parallel to the outcome of interest. However, in HVTN 704/HPTN 085, various STIs such as gonorrhea (oropharyngeal, urethral, and rectal), chlamydia (urethral and rectal), and syphilis (treponemal and non-treponemal serology) were measured throughout follow-up. Each STI outcome is a noisy proxy of factors that influence an individual's HIV-1 exposure, so including multiple STI outcomes in the analysis may unlock additional efficiency gain. Moreover, many STIs do not produce sterilizing immunity, so persistent/recurrent infections can occur. Chronic STIs can transiently increase risk of HIV-1 transmission and acquisition \citep{Chun2013-rj, Cohen2019-ld}. Accounting for persistent or recurrent STIs may provide additional explanatory information about HIV-1 exposure beyond the time-to-first STI, which may further boost precision and power. Finally, adjusting for quantitative readouts of STI tests (e.g., viral/bacterial load) is a yet unexplored idea that could potentially offer additional efficiency gains.

\section*{Acknowledgments and Funding Statement}

The authors thank the AMP study participants and study investigators. We would also like to thank Peter Gilbert and Larry Corey (Fred Hutchinson Cancer Center) for their editorial input on this paper.

This project has been funded in whole or in part by the National Institutes of Health, National Institute of Allergy and Infectious Diseases (R37 AI054165, UM1 AI068635). The content is solely the responsibility of the authors and does not necessarily represent the official views of the National Institutes of Health. Statistical methodology development was supported by the National Science Foundation Graduate Research Fellowship Program (Grant No. DGE-2140004). Any opinions, findings, conclusions, or recommendations expressed in this material do not necessarily reflect the views of the National Science Foundation. We also would like to acknowledge the NIH R01AI192632 awarded to BZ and HJ.

\bibliographystyle{unsrtnat}
\bibliography{biblio}

\appendix

\section{Key proofs}

\subsection{Proposition 1}\label{prf:prop1}

As a reminder, Proposition 1 consists of two parts (a) applying G-computation with respect to observed versions of $N$ (e.g., $\wt{N}$ or $\Delta_N$) as suggested by \citep{Etievantetal2023} fails to identify $S_{a}(t_0)$ and (b) applying inverse probability of complete NCO case weighting along with G-computation successfully identifies $S_{a}(t_0)$. Below, we treat each of these results separately.

\subsubsection{Proposition 1(a)}

First, we show that identification of $S_{a_0}(t)$ fails when using a G-formula which adjusts for observed versions of $N$ such as $\wt{N}$ and $\Delta_N$. Without loss of generality, we can focus on adjusting for $\wt{N}$, knowing that identical arguments will follow for the observed event indicator $\Delta_N$.

Recall the definition of our target parameter written using the G-formula with respect to the full-data unit where $N$ is observed.
\begin{align*}
S_{a_0}(t)  = \mathbb{E}_{X,N}\left[S_{0Y}(t \mid A=a_0, X, N)\right]
\end{align*}
Consider the following estimator which marginalizes over the distribution of $X$ and the observed NCO event time $\wt{N}$.
\begin{align*}
\theta = \mathbb{E}_{X, \wt{N}}\left[P(Y > t \mid A=a_0, X, \wt{N})\right]
\end{align*}
By causal consistency and strong ignorability, we have that 
\begin{align*}
\mu_{a_0}(t,X,\wt{n}) =P(Y > t \mid A=a_0, X, \wt{N}=\wt{n}) = P(Y(a_0) > t \mid X, \wt{N}(a) = \wt{n})
\end{align*}
Where $\wt{N}(a) := \min(N, C_N(a))$. Hence,
\begin{align*}
    \theta \equiv \mathbb{E}_{X,\wt{N}}\left[\mu_{a_0}(t,X,\wt{N})\right]
\end{align*}
To illustrate why $\theta$ is biased for the target $S_{a_0}(t)$, we appeal to the ideas proposed by \citeay{Rosenbaum_1984} regarding adjustment for a post-treatment variable. In brief, the literature of principal stratification claims that the potential outcome under a specific arm, such as $\wt{N}(a)$, is an intrinsic, deterministic characteristic of a participant, meaning it behaves exactly like a baseline covariate.\citep{FrangakisRubin2002} Because $\wt{N}(a)$ is essentially a baseline covariate, standardizing with respect to its distribution must identify the marginal survival curve of interest.
\begin{align*}
    S_{a_0}(t)  = \mathbb{E}_{X,\wt{N}(a_0)}\left[P(Y(a) \geq t \mid A=a_0, X, \wt{N}(a_0))\right] \equiv  \mathbb{E}_{X,\wt{N}(a_0)}\left[\mu_{a_0}(t,X,\wt{N}(a_0))\right]
\end{align*}
Now we can see plainly why $\theta$ is biased for the target parameter. The failure of identification results from the improper marginalization of $\mu_{a_0}$ over the \textit{observed law} of $(X,\wt{N})$ instead of the \textit{counterfactual law} of $(X,\wt{N}(a_0))$. The observed law is a contaminated mixture of the two potential outcome distributions.
\begin{align*}
    P(\wt{N} \mid X) = \pi_0(a_0 \mid X) P(\wt{N}(a_0)\mid X) + \pi_0(1-a_0 \mid X) P(\wt{N}(1-a_0)\mid X)
\end{align*}

In general, $\mathbb{E}_{X, \wt{N}}[\mu_{a_0}(t, X, \wt{N})] \neq \mathbb{E}_{X, \wt{N}(a_0)}[\mu_{a_0}(t, X, \wt{N}(a_0)]$ when the distribution of $\wt{N} \overset{d}{\neq} \wt{N}(a_0)$ \citep{Rosenbaum_1984}. Hence, the failure of identification results from the fact that our target parameter marginalizes over the counterfactual distribution $\wt{N}(a_0)$ while the proposed parameter incorrectly marginalizes over the distribution of the observed variable $\wt{N}$. 

\subsubsection{Proposition 1(b)}

We establish identification based on inverse probability weighting (IPW) of complete cases. Our proposition is.
\begin{align*}
   S_{a_0}(t_0) = \mathbb{E}_0\left[\frac{I(N \leq C)}{G_0(\wt{N} \mid A, X)} \frac{I(A=a_0)}{\pi_0(a_0 \mid X)} P_0(Y > t \mid A=a_0, X, \wt{N}) \right]
\end{align*}
Where $\mathbb{E}_0[\cdot]$ refers to marginalizing over the distribution of $(A, X, N, C)$. 
\begin{align*}
    &\mathbb{E}_0\left[\frac{I(N \leq C)}{G_0(\wt{N} \mid A, X)} \frac{I(A=a_0)}{\pi_0(a_0 \mid X)} P_0(Y > t \mid A=a_0, X, \wt{N}) \right] \\
    &= \mathbb{E}_0\left[\frac{I(N \leq C)}{G_0(N \mid A = a_0, X)} \frac{I(A=a_0)}{\pi_0(a_0 \mid X)} P_0(Y > t \mid A=a_0, X, N) \right] \\
    &=\mathbb{E}_{X,N}\left[P_A(A=a_0 \mid X, N) \times \mathbb{E}_C\left\{\frac{I(N \leq C)}{G_{0}(N \mid A=a_0, X) \pi_0(a_0 \mid X)} P_0(Y > t \mid A=a_0, X,  N) \mid X, N, A=a_0\right\} \right] \\
    &=\mathbb{E}_{X,N}\left[\mathbb{E}_C\left\{\frac{I(N \leq C)}{G_{0}(N \mid A=a_0, X)} P_0(Y > t \mid A=a_0, X,  N) \mid X, N, A=a_0\right\} \right] \\
    &=\mathbb{E}_{X,N}\left[P_0(Y > t \mid A=a_0, X,  N) \cancel{\left\{\frac{G_{0}(N \mid A=a_0, X)}{G_{0}(N \mid A=a_0, X)}\right\}} \right] \\
    &= \mathbb{E}_{X,N}\left[S_{0Y}(t \mid a_0, X, N) \right] = S_{a_0}(t)
\end{align*}
The first equality follows from $\wt{N}=N$ when $\Delta_N=1$ with probability 1. The second equality follows from Tower Law and law of total expectation. The third equality follows from cancelling the treatment probabilities. The fourth follows from evaluating the expectation of the indicator function in the numerator and cancelling like terms. The final equality follows from expressing the survival function $S_a(t)$ in G-computation form and using the arguments developed in \citeay{Westling_2023}.

\subsection{Exclusion restriction \ref{rem:indep_1}}\label{prf:rem1}

In the remark below, we discuss two important consequences of independent censoring (Assumption 4).

The first exclusion restriction follows directly from $(C_Y, C_N) \perp (Y,N) \mid (A, X)$. To show the latter exclusion restriction, we break it into two cases.
\begin{align*}
\{C_N \perp N \mid (A,X,\Delta_Y, \wt{Y})\} &= \begin{cases}
        C_N \perp N \mid (A, X, Y, C_Y > Y) & \text{ when } \Delta_Y=1 \\
        C_N \perp N \mid (A, X, C_Y, Y \geq C_Y) & \text{ when } \Delta_Y=0
\end{cases}
\end{align*}
We examine each of these cases separately to verify that Assumption 4 implies the latter result.

When $\Delta_Y=1$, the conditional probability of $C_N$ and $N$ is given by
\begin{align*}
    &P(C_N, N \mid A, X, Y=y, C_Y \geq y) = \frac{P(C_N, N, C_Y \geq y, Y=y \mid A,X)}{P(C_Y \geq y, Y=y \mid A,X)} \\
    &=\frac{P(N, Y=y \mid A,X)}{P(Y=y \mid A,X)} \cdot \frac{P(C_N, C_Y \geq y \mid A,X)}{P(C_Y \geq y \mid A, X)} %= P(N \mid A,X,Y=y) \cdot P(C_N \mid A, X, C_Y \geq y)
\end{align*}
Where the first equality follows from Bayes rule and the second from applying Assumption 4 to the numerator and denominator. Since the joint probability factors into a term depending on $N$ and another depending on $C_N$, proving $C \perp N \mid A, X, \wt{Y}, \Delta_Y=1$.

A very similar result can be shown for the case when $\Delta_Y=0$:
\begin{align*}
    &P(C_N, N \mid A, X, C_Y=c_Y, Y>c_Y) = \frac{P(C_N, N, C_Y = y, Y>c_Y \mid A,X)}{P(C_Y = c_Y, Y > c_Y \mid A,X)} \\
    &=\frac{P(N, Y>c_Y \mid A,X)}{P(Y > c_Y \mid A,X)} \cdot \frac{P(C_N, C_Y = c_Y \mid A,X)}{P(C_Y = c_Y \mid A, X)}
\end{align*}
Where as before, Bayes rule and Assumption 4 show that the conditional likelihood can be factored into a term depending on $N$ and another depending on $C_N$, which proves $C \perp N \mid A, X, \wt{Y}, \Delta_Y=0$.

Taken together, we see Assumption 4 implies $C_N \perp N \mid (A,X,\wt{Y},\Delta_Y)$.

\subsection{Theorem 1} \label{prf:Thm1}

Recall that our target parameter is the treatment-arm-specific survivor function of $Y$ at some landmark time $t_0$, $S_{a_0}(t_0)$. As a reminder, we refer to the ``full" data as observations where the primary outcome $Y$ is right-censored (by $C_Y$) by the predictive, negative control infection time $N$ is observed for all units.
\[
O^{\text{full}} := (A, X, \wt{Y}, \Delta_Y, N)
\]
In practice, we only sample the ``observed" data, for which $N$ is also censored (by $C_N$).
\[
O^{\text{obs}} := (A, X, \wt{Y}, \Delta_Y, \wt{N}, \Delta_N)
\]
How shall we estimate $S_{a_0}(t_0)$ efficiently using the observed data? We will base our estimate on an estimating equation built using the observed-data efficient influence (EIF), commonly referred to as a one-step estimator \citep{Bickel_semi_1993}. 

To establish the observed-data EIF, we must build on the theory of full data influence functions. Let $\mathcal{H}^F$ denote the full-data Hilbert space, which consists of mean-zero measurable functions of $O^{\text{full}}$ with finite variance equipped with covariance inner product. Let $\mathcal{H}$ denote the observed-data Hilbert space, which consists of mean-zero measurable functions of $O^{\text{obs}}$ with finite variance equipped with covariance inner product. We focus on these Hilbert spaces because they represent plausible choices of estimating functions that can be used to construct estimating equations for $S_{a_0}(t_0)$. Note that influence functions of $S_{a_0}(t_0)$ in our full-data model lie in a subspace of $\mathcal{H}$ orthogonal to the nuisance tangent space. Hence, influence functions represent choices of estimating functions that are minimally influenced by perturbations of the nuisance parameters, which make them ideal choices for estimation.

A direct application of Theorem 1 of \citeay{Westling_2023} yields the EIF in the full-data model where $N$ is observed for all participants.
\begin{center}
\resizebox{\textwidth}{!}{
$\begin{aligned}
    \phi^F_{a_0,t_0}(o^{\text{full}}) &= S_{0Y}(t_0 \mid a_0, x, n) \left[1 - \frac{I(a=a_0)}{\pi_0(a_0 \mid x)} \left\{\frac{I(\wt{y} \leq t_0, \delta_y=1)}{S_{0Y}(\wt{y} \mid a, x, n) G_{0Y}(\wt{y} \mid a, x)} - \int_0^{\min(t_0, \wt{y})} \frac{\Lambda_{0Y}(du \mid a, x, n)}{S_{0Y}(u \mid a, x, n) G_{0Y}(u \mid a, x)}\right\}\right]
\end{aligned}$}
\end{center}
Note that $\pi_0(a_0 \mid x)$ does not depend on $N$ by Assumption 5, and $G_{0Y}(u \mid a, x)$ does not depend on $N$ by Assumptions 4 and 5. However, $S_{0Y}$ does depend on $N$, and the additional regressor is what drives the precision gain.

How do we proceed from the full-data model to the observed-data model? We make this transition by recognizing that the censoring of $N$ defines a coarsening of the full data unit, $O^{\text{full}}$. Following \citeay{Tsiatis2006-cc}, we represent the observed data as $O^{\text{obs}} = \{\mathcal{C}_N, G_{\mathcal{C}_N}(O^{\text{full}})\}$ where the coarsening level $\mathcal{C}_N$ is
\begin{align*}
    \mathcal{C}_N = \begin{cases}
        C_N & \text{if } C_N < N \\
        \infty & \text{if } C_N \geq N
    \end{cases}
\end{align*}
and $G_r(O^{\text{full}})$ is the corresponding many-to-one function of the full data satisfying
\begin{align*}
    G_r(O^{\text{full}}) = \begin{cases}
        (A,X,\wt{Y},\Delta_Y, C_N=r, N>r) & \text{if } r < \infty \\
        (A,X,\wt{Y},\Delta_Y, C_N \geq N, N) & \text{if } r = \infty
    \end{cases}
\end{align*}
Following \citeay{Tsiatis2006-cc}, we can express the coarsening variable's hazard at level $r$ as
\begin{align*}
    \lambda_r(O^{\text{full}}) = P(\mathcal{C}_N=r \mid \mathcal{C}_N \geq r, O^{\text{full}}).
\end{align*}
Using the equivalence $\{\mathcal{C}_N = r\} \equiv (C_N \geq r, N > r) \cup (N \leq C_N)$; the events overlap on $\{r < N \leq \mathcal{C}_n\}$. We can rewrite the hazard in terms of the censoring variable $C_N$.
\begin{align*}
    \lambda_r(O^{\text{full}}) = P(C_N=r, N>r \mid (C_N \geq r, N > r) \cup (N \leq C_N), O^{\text{full}})
\end{align*}
On the event $\{N<r\}$, the hazard is zero, so we can restrict attention to the event $\{N \geq r\}$, on which the union in the conditioning event reduces to $\{C_N \geq r\}$.
\begin{align*}
    \lambda_r(O^{\text{full}}) = I(N \geq r) P(C_N=r \mid C_N \geq r, O^{\text{full}}) = I(N \geq r) \lambda_{C_N}(r \mid O^{\text{full}}) 
\end{align*}
Where $\lambda_{C_N}(r \mid O^{\text{full}}$ is the hazard of $C_N$ at level $r$ conditional on the full data unit. Under coarsening at random (i.e., independent censoring), we have that the $\lambda_{C_N}(r \mid O^{\text{full}}) = \lambda_{C_N}(r \mid G_r(O^{\text{full}}))$, yielding
\begin{align*}
    \lambda_r(O^{\text{full}}) &= I(N \geq r) \lambda_{C_N}(r \mid A, X, \wt{Y}, \Delta_Y)
\end{align*}
Where $\lambda_{C_N}(r \mid A, X, \wt{Y}, \Delta_Y)$ is the hazard of $C_N$ at level $r$ conditional on $(A,X,\wt{Y},\Delta_Y)$.

Define the conditional survival and cumulative hazard functions of $C_N$ given $(A,X,\wt{Y},\Delta_Y)$ as follows:
\begin{align*}
    G_{0N}(t \mid a,x,\wt{y},\delta_y) := \exp\left\{\int_0^t \lambda_{C_N}(u \mid a,x,\wt{y},\delta_y) du \right\} \qquad \Lambda_{C_N}(t \mid a, x, \wt{y}, \delta_y) := \int_0^t \lambda_{C_N}(u \mid a,x,\wt{y},\delta_y) du
\end{align*}
Let $N_C(t) := I(\wt{N} \leq t, \Delta_N=0\}$ denote the counting process for censoring and let $Y_R(t) := I(\wt{N} \geq t)$ denote the at risk indicator. Under \ref{rem:indep_1} implied by Assumption 4, the process
\begin{align*}
    M_{C_N}(t) := N_C(t) - \int_0^t Y_R(u) d\Lambda_{C_N}(u \mid a, x, \wt{y}, \delta_y)
\end{align*}
is a martingale with respect to the filtration generated by the observed data on the time scale of $C_N$.

By Theorem 10.4 of \citeay{Tsiatis2006-cc} (in its continuous-time form, equation 10.76) and using the fact that $\phi^F_{a_0, t_0}$ has finite variance, the projection of $\phi^F_{a_0, t_0}$ onto the observed-data Hilbert space is as follows.
\begin{align*}
    \phi_{a_0, t_0}^{\text{obs}} &= \frac{\Delta_N \phi^F_{a_0, t_0}(\wt{Y}, \Delta_Y, A, X, \wt{N})}{G_{0N}(\wt{N} \mid A, X, \wt{Y}, \Delta_Y)} + \int_0^{\infty} \frac{h^*(u, A, X, \wt{Y}, \Delta_Y)}{G_{0N}(u \mid A, X, \wt{Y}, \Delta_Y)} dM_{C_N}(u \mid A, X, \wt{Y}, \Delta_Y)
\end{align*}
Where $h^*$ is an augmentation function chosen to make $\phi_{a_0,t_0}^{\text{obs}}$ orthogonal to the nuisance tangent space contributed by the $C_{N}$ coarsening mechanism. The optimal choice (in the sense of the observed data influence function with lowest variance) is
\begin{align*}
    h^*(u, a, x, \wt{y}, \delta_y) = Q_0(u \mid a, x, \wt{y}, \delta_y) := \mathbb{E}[\phi^F_{a_0, t_0}(\wt{Y}, \Delta_Y, A, X, N) \mid N \geq u, A=a, X=x, \wt{Y}=\wt{y}, \Delta_Y = \delta_y]
\end{align*}
the conditional expectation of the full-data influence function over the at-risk set $\{N \geq u\}$, given baseline factors and primary outcome history. 

Substituting $Q_0$ in for $h^*$ and expanding the censoring martingale into two terms, $dM_{C_N}(u) = dN_{C}(u) - Y_R(u) d\Lambda_{C_N}(u)$, the integral term can be writen as
\begin{align*}
    \int_0^{\infty} \frac{Q_0(u)}{G_{0N}(u)} dM_{C_N}(u) = \frac{(1-\Delta_N) Q_0(\wt{N})}{G_{0N}(\wt{N} \mid A, X, \wt{Y}, \Delta_Y)} - \int_0^{\wt{N}} \frac{Q_{0}(u \mid A, X, \wt{Y}, \Delta_Y)}{G_{0N}(u \mid A, X, \wt{Y}, \Delta_Y)} d\Lambda_{C_N}(u \mid A, X, \wt{Y}, \Delta_Y)
\end{align*}
Where we express some conditioning for the sake of compactness. Combining with the result from Theorem 10.4 above, 
\begin{align*}
    \phi_{a_0, t_0}^{\text{obs}} &= \frac{\Delta_N \phi^F_{a_0, t_0}(\wt{Y}, \Delta_Y, A, X, \wt{N})}{G_{0N}(\wt{N} \mid A, X, \wt{Y}, \Delta_Y)} + \frac{(1-\Delta_N) Q_0(\wt{N} \mid A,X,\wt{Y},\Delta_Y)}{G_{0N}(\wt{N} \mid A, X, \wt{Y}, \Delta_Y)} \\
    &- \int_0^{\wt{N}} \frac{Q_{0}(u \mid A, X, \wt{Y}, \Delta_Y)}{G_{0N}(u \mid A, X, \wt{Y}, \Delta_Y)} d\Lambda_{C_N}(u \mid A, X, \wt{Y}, \Delta_Y)
\end{align*}
The expression has the form of an augmented inverse probability of censoring complete-case (AIPWCC) estimating function, matching the influence function showed in \ref{eq:AIPWCC_IF}: the first term is the inverse-probability-weighted complete-case contribution from participants whose $N$ is observed; the second and third terms together form the augmentation that recovers efficiency by exploiting partial information from $N$-censored participants. 

Now we wish to show that the influence function $\phi^{\text{obs}}_{a_0, t_0}$ is the efficient influence function globally. Note that Theorem 10.4 supports that $\phi^{\text{obs}}_{a_0, t_0}$ is the ``optimal" choice of observed-data influence function based on a particular choice of full-data influence function $\phi^F_{a_0, t_0}(\cdot)$. However, in general, there is no guarantee that our choice of $\phi^F_{a_0, t_0}(\cdot)$ leads to the optimal observed data influence function.

However, in this case, we can rightly claim that $\phi^{\text{obs}}_{a_0, t_0}$ is globally efficient. By Theorem 1 of \citeay{Westling_2023}, $\phi^F_{t_0, a_0}$ is the \textit{unique/nonparametric} EIF in the full-data model. Since the full-data influence function is unique, the projection onto the observed data Hilbert space is also unique. Hence, $\phi^{\text{obs}}_{a_0, t_0}$ must be the globally efficient influence function in the observed-data model.\qed

\subsection{Proposition 2}\label{prf:prop2}

Recall our goal is to derive the relationship between the remaining nuisance parameter in the EIF $\phi^{\text{obs}}_{a_0, t_0}$ -- $f_{0N}(n \mid a, x, \wt{y}, \delta_y)$ -- and the nuisances parameters compatible with a ``one-way" factorization of the joint likelihood $(Y,N) \mid (A,X)$ -- $f_{N}(n \mid a, x)$ and $S_{0Y}(y \mid a, x, n)$. The key piece of our derivation is an application of Bayes rule.

Specifically, we decompose the target nuisance parameter according to the event indicator $\delta_y$.
\begin{align*}
    f_{0N}(n \mid a, x, \wt{y}, \delta_y) &= \delta_y f(n \mid a, x, c_y>\wt{y}, y=\wt{y}) + (1-\delta_y) f_{N}(n \mid a, x, c_Y=\wt{y}, y > \wt{y}) \\
    &= \delta_y f(n \mid a, x, y=\wt{y}) + (1-\delta_y) f(n \mid a, x, y >\wt{y})
\end{align*}
The cancellation of the terms $c_y > \wt{y}$ and $c_y = \wt{y}$ in the second line is the result of independent censoring (Assumption 4). Next, we apply Bayes rule to the constituent pieces.
\begin{align*}
    f_{0N}(n \mid a, x, \wt{y}, \delta_y) &= \delta_y \left(\frac{f_{0Y}(Y=\wt{y} \mid a, x, n) f(n \mid a,x)}{f_Y(Y=\wt{y} \mid a, x)}\right) + (1-\delta_y) \frac{f(n \mid a, x)S_{0Y} (\wt{y} \mid a, x, n)}{S_Y(\wt{y} \mid a, x)} \\
    &= f_N(n \mid a, x) \cdot \frac{L(\wt{y}, \delta_y \mid a, x, n)}{L(\wt{y}, \delta_y \mid a, x)}
\end{align*}
where
\begin{align*}
    \frac{L(\wt{y}, \delta_y \mid a, x, n)}{L(\wt{y}, \delta_y \mid a, x)} &=\left[\delta_y \left(\frac{f_{0Y}(\wt{y} \mid a, x, n)}{f_{Y}(\wt{y} \mid a, x)}\right) + (1-\delta_y) \frac{S_ {0Y}(\wt{y} \mid a, x, n)}{S_Y(\wt{y} \mid a, x)}\right].
\end{align*}
$S_{Y}(u \mid a,x) = \int_0^{\infty} S_{0Y}(u \mid a, x, n) f_{N}(n \mid a, x) dn$ by Tower law. $f_{0Y}$ and $f_{Y}$ refer to the conditional density functions associated with $S_{0Y}$ and $S_Y$, which can be obtained from the survival functions using the following identity $f(x) = - (d/du) S(u) \mid_{u=x}$.

Hence, a compatible estimator of $f_{0N}(n \mid a, x, \wt{y}, \delta_y)$ can be obtained from the nuisances that define a one-way factorization of the joint likelihood: $f_{N}(n \mid a, x)$ and $S_{0Y}(y \mid a, x, n)$. \qed

\subsection{Theorem \ref{th:MR}}\label{prf:2}

\begin{Lem}\label{lem:MR}
   Let $\eta=(\pi,G_Y,G_N,S_Y,f_N)$ be the nuisance parameter vector, with true value $
\eta_0=(\pi_0,G_{0Y},G_{0N},S_{0Y},f_{0N})$.
Suppose Assumptions 1--6 hold,
%and standard regularity conditions hold, %including the conditional independent censoring %conditions for $Y$ and $N$. 
then
\bse
E_0\{\phi_{a_0,t_0}^{\mathrm{obs}}(O^{\mathrm{obs}};\eta)\}=S_{a_0}(t_0)
\ese
if any one of the following nuisance-function sets is correctly specified:
\bse
	&&(S_Y, G_N) = (S_{0Y}, G_{0N}), \qquad
	(S_Y, f_N) = (S_{0Y}, f_{0N}), \\
	&&(\pi, G_Y, G_N) = (\pi_0, G_{0Y},G_{0N}),\qquad
	(\pi, G_Y, f_N) = (\pi_0, G_{0Y},f_{0N}).
\ese
\end{Lem}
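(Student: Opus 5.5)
We will prove the lemma in two layers. The outer layer is the AIPWCC structure of $\phi^{\mathrm{obs}}$ in $(G_N,f_N)$, which handles the $N$-censoring. The inner layer is the double robustness of the full-data influence function $\phi^F$ of \citeay{Westling_2023} in $(S_Y)$ versus $(\pi,G_Y)$. Throughout, we write $\phi^F(\eta)$ for the full-data function evaluated at generic $(\pi,G_Y,S_Y)$. We write $Q_\eta(u)$ for its conditional mean over $\{N\ge u\}$, computed under the working density $f_N$ (this is where $f_N$ enters). All expectations are taken under $P_0$.

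\textbf{Step 1 (outer layer).} We will show that
\begin{align*}
E_0\{\phi^{\mathrm{obs}}(O^{\mathrm{obs}};\eta)\mid A,X,\wt Y,\Delta_Y\}=E_0\{\phi^F(O^{\mathrm{full}};\eta)\mid A,X,\wt Y,\Delta_Y\}
\end{align*}
whenever either $G_N=G_{0N}$ or $f_N=f_{0N}$, for arbitrary $(\pi,G_Y,S_Y)$. This is the standard AIPWCC identity, argued conditionally on $(A,X,\wt Y,\Delta_Y)$ using the exclusion restriction (\ref{rem:indep_1}).

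If $G_N=G_{0N}$, the first term of (\ref{eq:AIPWCC_IF}) has conditional mean $E_0\{\phi^F(\eta)\mid\cdot\}$, by the computation already used in Proposition 1(b). The augmentation term is a stochastic integral of a predictable integrand against the true martingale $M_{C_N}$, so it has mean zero regardless of $Q$.

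If instead $f_N=f_{0N}$, then $Q_\eta(u)=E_0\{\phi^F(\eta)\mid N\ge u,\cdot\}$. We rewrite the sum of the two terms as
\begin{align*}
\phi^F(\eta)+\int_0^\infty\frac{Q_\eta(u)-\phi^F(\eta)}{G_N(u)}\,dM^{*}(u),
\end{align*}
where $M^{*}$ is built from the working $G_N$. This uses the identity $\Delta_N/G_N(\wt N)=1-\int_0^{\infty} G_N(u)^{-1}dM^{*}_{C_N}(u)$, which holds for any $G_N$. We then condition on $N$ and the history. The event $\{N\ge u\}$ together with the counting-process structure gives the integrand $Q_\eta(u)-E_0\{\phi^F\mid N\ge u,\cdot\}=0$ against the at-risk intensity, and the compensator difference vanishes.

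\textbf{Step 2 (inner layer).} By iterated expectations, Step 1 reduces the claim to $E_0\{\phi^F(O^{\mathrm{full}};\pi,G_Y,S_Y)\}=S_{a_0}(t_0)$ under either $S_Y=S_{0Y}$ or $(\pi,G_Y)=(\pi_0,G_{0Y})$. This is the double robustness of the Westling et al.\ influence function, with $X$ replaced by $(X,N)$. This replacement is legitimate because $A\perp N\mid X$ and $C_Y\perp N\mid A,X$ (Assumptions 3--5), so $\pi$ and $G_Y$ need not condition on $N$.

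We would verify it via the Duhamel-type identity
\begin{align*}
E_0\{\phi^F(\eta)\}-S_{a_0}(t_0)=-E_0\Bigl[\frac{\pi_0}{\pi}S_Y(t_0)\int_0^{t_0}\frac{S_{0Y}(u^-)G_{0Y}(u)}{S_Y(u)G_Y(u)}\,(\Lambda_{0Y}-\Lambda_Y)(du)\Bigr]+E_0\{S_Y(t_0)-S_{0Y}(t_0)\}.
\end{align*}
Both pieces cancel if $S_Y=S_{0Y}$. If instead $\pi=\pi_0$ and $G_Y=G_{0Y}$, the Duhamel equation turns the integral into $S_Y(t_0)-S_{0Y}(t_0)$, and again everything cancels. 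Combining the two layers yields exactly the four nuisance sets listed in the lemma.

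\textbf{Main obstacle.} We expect the main obstacle to be the $f_N$-correct branch of Step 1. There the working $G_N$ is wrong, so $M^{*}$ is not a martingale. The cancellation must therefore come from $Q_\eta$ being the correct conditional mean on the at-risk set, which requires careful handling of the $\Delta_Y=0$ case and of the shared-censoring case $C_N=C_Y$, where $G_{0N}$ is degenerate.

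A further subtlety is that $Q_\eta$ is defined using $\phi^F(\eta)$ with possibly wrong $(\pi,G_Y,S_Y)$. The argument is valid only if $Q$ is computed by plugging these same working nuisances into $\phi^F$ and integrating against the true $f_{0N}$. We will state this convention explicitly.
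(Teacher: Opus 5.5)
Your proposal is correct and follows essentially the same route as the paper: the paper's four cases are exactly your product of an outer layer and an inner layer. The outer layer is $G_N$-correct (the IPCW term is unbiased and $T_2,T_3$ cancel) or $f_N$-correct (the conditional-mean property of $Q$). The inner layer is $S_Y$-correct ($\E\{H\mid a_0,X,N\}=0$) or $(\pi,G_Y)$-correct (Duhamel). The only cosmetic difference is in the $f_N$-correct branch, where you use the identity $\Delta_N/G_N(\wt N)=1-\int G_N^{-1}\,dM^*$ and the paper instead uses the ratio $W=G_{0N}/G_N$ with $W(N)=1+\int I(N\ge u)\,dW(u)$; this amounts to the same computation.
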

\begin{proof}[Proof of Lemma \ref{lem:MR}]
	For a generic nuisance vector $\eta$, define
    \be\label{eq:Q}
    Q(u\mid a,x,\wt y,\delta_y)
    \equiv\E\left\{
    \phi^F_{a_0,t_0}(\wt y,\delta_y,a,X,N;\eta)
    \mid N\geq u,A=a,X=x,\wt Y=\wt y,\Delta_Y=\delta_y
    \right\}.
    \ee
    For notational simplicity, we write this as $Q(u)$ whenever the conditioning variables and nuisance value are clear from context.
    
    Write
	\be\label{eq:phi_decompose}
	\phi^{\mathrm{obs}}_{a_0,t_0}=T_1+T_2-T_3,
	\ee
	where
	\bse
	T_1&=&\frac{\Delta_N}{G_N(\wt N\mid a_0,X,\wt Y,\Delta_Y)}
	\phi^F_{a_0,t_0}(\wt Y,\Delta_Y,a_0,X,\wt N),\\
	T_2&=&\frac{1-\Delta_N}{G_N(\wt N\mid a_0,X,\wt Y,\Delta_Y)}
	Q(\wt N\mid \wt a_0,X,\wt Y,\Delta_Y),\\
	T_3&=&\int_0^{\wt N}\frac{Q(u\mid a_0,X,\wt Y,\Delta_Y)}{G_N(u\mid a_0,X,\wt Y,\Delta_Y)}d\Lambda_{C_N}(u\mid a_0,X,\wt Y,\Delta_Y).
	\ese
    
	We prove the result case by case.
	
	\textbf{Case 1: $(S_Y,G_N)=(S_{0Y},G_{0N})$.}
	Under this case, let
	\bse
	\phi_{a_0,t_0}^F(o^{\text{full}};\eta)= S_{0Y}(t_0 \mid a_0, x, n) \left\{1 - \frac{I(a=a_0)}{\pi(a_0 \mid x)} H_{S_{0Y},G_Y,t_0,a_0}(\wt{y},\delta_y,x,n)\right\},
	\ese
	where
	\be\label{eq:H}
	H_{S_Y,G_Y,t_0,a_0}(\wt{y},\delta_y,x,n)=\frac{I(\wt{y} \leq t_0, \delta_y=1)}{S_Y(\wt{y} \mid a_0, x, n) G_Y(\wt{y} \mid a_0, x)} - \int_0^{\min(t_0, \wt{y})} \frac{\Lambda_Y(du \mid a_0, x, n)}{S_Y(u \mid a_0, x, n) G_Y(u \mid a_0, x)}.
	\ee
	We note that $\E\{H_{S_{0Y},G_Y,t_0,a_0}(\wt{Y},\Delta_Y,X,N)\mid A=a_0,X=x,N=n\}$ equals
	\bse
	\int_0^{t_0} \frac{S_{0Y}(y-\mid a_0,x,n)G_{0Y}(y\mid a_0,x)}{S_{0Y}(y\mid a_0,x,n)G_Y(y\mid a_0,x)}\Lambda_{0Y}(dy \mid a_0,x,n)-\int_0^{t_0} \frac{S_{0Y}(u-\mid a_0,x,n)G_{0Y}(u\mid a_0,x)}{S_{0Y}(u \mid a_0, x, n) G_Y(u \mid a_0, x)}\Lambda_{0Y}(du \mid a_0, x, n)=0,
	\ese
	then the full-data EIF is unbiased:
	\bse
	\E\{\phi^F_{a_0,t_0}(O^{\mathrm{full}};\eta)\}=\E\{S_{0Y}(t_0\mid a_0,X,N)\}=S_{a_0}(t_0) .
	\ese

	For the first term $T_1$, under Case 1, we first have
	\bse
	\E \left\{ \frac{\Delta_N}{G_{0N}(\wt{N}\mid a_0,X,\wt{Y},\Delta_Y)}\phi_{a_0, t_0}^F \mid \wt{Y},\Delta_Y,a_0, X,N\right\}&=&\E \left\{ \frac{I(N\le C_N)}{G_{0N}(\wt{N}\mid a_0,X,\wt{Y},\Delta_Y)}\phi_{a_0, t_0}^F \mid \wt{Y},\Delta_Y,a_0, X,N\right\}\\
	&=&\phi_{a_0, t_0}^F(\wt{Y},\Delta_Y,a_0, X,N)\cdot \frac{\E(C_N\ge N\mid \wt{Y},\Delta_Y,a_0, X,N)}{G_{0N}(\wt{N}\mid a_0,X,\wt{Y},\Delta_Y)}\\
	&=&\phi_{a_0, t_0}^F(\wt{Y},\Delta_Y,a_0, X,N).
	\ese
	Thus, $\E(T_1)=\E\{\Delta_N\phi_{a_0,t_0}^F/G_{0N}(\wt{N}\mid a_0,X,\wt{Y},\Delta_Y)\}=S_{a_0}(t_0)$.
    
	For the second term $T_2$, under Case 1, its expectation is
	\bse
	\E\left\{ \frac{1-\Delta_N}{G_{0N}(\wt{N}\mid a_0,X,\wt{Y},\Delta_Y)}Q(\wt{N})  \right\}&=&\E\left\{ \frac{I(C_N<N)}{G_{0N}(C_N\mid a_0,X,\wt{Y},\Delta_Y)}Q(C_N)  \right\}\\
	&=& \E\left\{ \int_0^N \frac{Q(u)}{G_{0N}(u\mid a_0,X,\wt{Y},\Delta_Y)}G_{0N}(u\mid a_0,X,\wt{Y},\Delta_Y)d\Lambda_{0C_N}(u\mid a_0,X, \wt{Y},\Delta_Y)\right\}\\
	&=& \E\left\{\int_0^N Q(u)d\Lambda_{0C_N}(u\mid a_0,X, \wt{Y},\Delta_Y)\right\}.
	\ese 
	The expectation of the third term $T_3$ equals to
	\bse
%	E_0\left\{ \int_0^\infty I(\wt{N}\ge u)\frac{Q^*(u)}{G_0(u\mid a_0,X)} d\Lambda_C(u\mid a_0,X)   \right\}&=&
	&&\E\left\{ \int_0^\infty I(N\ge u)I(C_N\ge u)\frac{Q(u)}{G_{0N}(u\mid a_0,X,\wt{Y},\Delta_Y)} d\Lambda_{0C_N}(u\mid a_0,X, \wt{Y},\Delta_Y)   \right\}\\
	&=& \E\left[\int_0^\infty \E\left\{I(N\ge u)I(C_N\ge u)\mid \wt{Y},\Delta_Y,a_0,X,N\right\}\frac{Q(u)}{G_{0N}(u\mid a_0,X,\wt{Y},\Delta_Y)}d\Lambda_{0C_N}(u\mid a_0,X, \wt{Y},\Delta_Y)\right]\\
	&=&\E\left\{\int_0^N Q(u)d\Lambda_{0C_N}(u\mid a_0,X, \wt{Y},\Delta_Y)\right\}
	\ese
    Therefore, $T_2$ and $T_3$ cancel in expectation, so that
    \bse
    E_0\{\phi^{\mathrm{obs}}_{a_0,t_0}(O^{\mathrm{obs}};\eta)\}
    =E(T_1)=S_{a_0}(t_0),
    \ese
    which establishes Case 1.
    
	\textbf{Case 2: $(S_Y,f_N)=(S_{0Y},f_{0N})$.}
	Again, $S_Y^*=S_{0Y}$ implies that the full-data EIF is
	unbiased, as established in Case 1:
	\bse
	\E\{\phi^F_{a_0,t_0}(O^{\mathrm{full}};\eta)\}=S_{a_0}(t_0) .
	\ese
	The expectation of $T_1$ becomes
	\bse
	\E\left[\E\left\{\frac{\Delta_N}{G_N(\wt{N}\mid a_0,X,\wt{Y},\Delta_Y)}\phi_{a_0,t_0}^F \mid \wt{Y},\Delta_Y,a_0,X,N\right\}\right]&=&\E\left[\E\left\{\frac{I(N\le C_N)}{G_N(N\mid a_0,X,\wt{Y},\Delta_Y)}\phi_{a_0,t_0}^F \mid \wt{Y},\Delta_Y,a_0,X,N\right\}\right]\\
	&=&\E\left\{  \frac{G_{0N}(N\mid a_0,X,\wt{Y},\Delta_Y)}{G_N(\wt{N}\mid a_0,X,\wt{Y},\Delta_Y)}\phi_{a_0,t_0}^F\right\}.
	\ese
	The expectation of $T_2$ becomes
	\bse
	\E\left\{ \int_0^N \frac{Q(u)}{G_N(u\mid a_0,X,\wt{Y},\Delta_Y)}G_{0N}(u\mid a_0,X,\wt{Y},\Delta_Y)d\Lambda_{0C_N}(u\mid a_0,X,\wt{Y},\Delta_Y)\right\},
	\ese
	while the expectation of $T_3$ becomes
	\bse
	\E\left\{ \int_0^N \frac{Q(u)}{G_N(u\mid a_0,X,\wt{Y},\Delta_Y)}G_{0N}(u\mid a_0,X,\wt{Y},\Delta_Y)d\Lambda_{C_N}(u\mid a_0,X,\wt{Y},\Delta_Y)\right\}.
	\ese
	Define $W(u\mid a,x,\wt{y},\delta_y)\equiv G_{0N}(u\mid a,x,\wt{y},\delta_y)/G_N(u\mid a, x,\wt{y},\delta_y)$, we have
	\bse
	dW(u\mid a_0,x,\wt{y},\delta_y)=W(u\mid a_0,x,\wt{y},\delta_y)\{d\Lambda_{C_N}(u\mid a_0,x,\wt{y},\delta_y)-d\Lambda_{0C_N}(u\mid a_0,x,\wt{y},\delta_y)\}.
	\ese
	Thus, 
	\bse
    \E(T_2)-\E(T_3)&=&
	\E\left\{ \int_0^N \frac{Q(u)}{G_N(u\mid a_0,X,\wt{Y},\Delta_Y)}G_{0N}(u\mid a_0,X,\wt{Y},\Delta_Y)d\Lambda_{0C_N}(u\mid a_0,X,\wt{Y},\Delta_Y)\right\}\\
	&&-\E\left\{ \int_0^N \frac{Q_(u)}{G_N(u\mid a_0,X,\wt{Y},\Delta_Y)}G_{0N}(u\mid a_0,X,\wt{Y},\Delta_Y)d\Lambda_{C_N}(u\mid a_0,X,\wt{Y},\Delta_Y)\right\}\\
	&=& -\E\left\{\int_0^N Q(u)dW(u\mid a_0,X,\wt{Y},\Delta_Y)   \right\}.
	\ese
	Further, we have
	\bse
	\E\left\{\int_0^N Q(u)dW(u\mid a_0,X,\wt{Y},\Delta_Y)\mid \wt{Y},\Delta_Y,a_0,X\right\}
	&=&
	\int_0^\infty
	\E\left\{I(N\ge u)Q(u)\mid \wt{Y},\Delta_Y,a_0,X\right\}
	dW(u\mid a_0,X,\wt{Y},\Delta_Y)\\
	&=&
	\int_0^\infty
	\E\left\{I(N\ge u)\phi^F_{a_0,t_0}\mid \wt{Y},\Delta_Y,a_0,X\right\}
	dW(u\mid a_0,X,\wt{Y},\Delta_Y).
	\ese
	In addition, from $W(N\mid a_0,X,\wt{Y},\Delta_Y)=1+\int_0^\infty I(N\ge u)dW(u\mid a_0,X,\wt{Y},\Delta_Y)$, we have
	\bse
	&&\E\left\{W(N\mid a_0,X,\wt{Y},\Delta_Y)\phi^F_{a_0,t_0}\mid \wt{Y},\Delta_Y,a_0,X\right\}\\
	&=&
	\E\{\phi^F_{a_0,t_0}\mid \wt{Y},\Delta_Y,a_0,X\}+
	\int_0^\infty
	\E\left\{I(N\ge u)\phi^F_{a_0,t_0}\mid \wt{Y},\Delta_Y,a_0,X\right\}
	dW(u\mid a_0,X,\wt{Y},\Delta_Y).
	\ese
	Combining the above two displays, conditional on $(\wt Y,\Delta_Y,a_0,X)$, we have
	\bse
    E(T_1+T_2-T_3\mid \wt Y,\Delta_Y,a_0,X)&=&	\E\left[\left\{
	W(N\mid a_0,X,\wt{Y},\Delta_Y)\phi^F_{a_0,t_0}
	-
	\int_0^N Q(u)dW(u\mid a_0,X,\wt{Y},\Delta_Y)\right\}
	\mid \wt Y,\Delta_Y,a_0,X
	\right] \\
	&=&
	\E\{\phi^F_{a_0,t_0}\mid \wt Y,\Delta_Y,a_0,X\}.
	\ese
	Taking expectation on both sides yields
	\bse
	\E\left\{\phi^{\mathrm{obs}}_{a_0,t_0}(O^{\text{obs}};\eta)\right\}
	&=&
	\E\left\{
	W(N\mid a_0,X,\wt{Y},\Delta_Y)\phi^F_{a_0,t_0}
	-
	\int_0^N Q(u)dW(u\mid a_0,X,\wt{Y},\Delta_Y)
	\right\}\\
	&=&
	\E\{\phi^F_{a_0,t_0}(O^{\mathrm{full}};\eta)\}=S_{a_0}(t_0),
	\ese
    which establishes Case 2.
	
	\textbf{Case 3: $(\pi,G_Y,G_N)=(\pi_0,G_{0Y},G_{0N})$.} Under this case, we have
	\bse
	\phi_{a_0,t_0}^F(\wt{y}, \delta_y, x,n;\eta)= S_Y(t_0 \mid a_0, x, n) \left\{1 - \frac{I(a=a_0)}{\pi_0(a_0 \mid x)} H_{S_Y,G_{0Y},t_0,a_0}(\wt{y},\delta_y,x,n)\right\},	
	\ese
	where $H_{S_Y,G_Y,t_0,a_0}(\wt{y},\delta_y,x,n)$ is defined in \eqref{eq:H}. 
	
	We note that $\E\{H_{S_Y,G_{0Y},t_0,a_0}(\wt{Y},\Delta_Y,X,N)\mid A=a_0,X=x,N=n\}$ equals to
	\bse
	&&\int_0^{t_0} \frac{S_{0Y}(y-\mid a_0,x,n)G_{0Y}(y\mid a_0,x)}{S_Y(y\mid a_0,x,n)G_{0Y}(y\mid a_0,x)}\Lambda_{0Y}(dy \mid a_0,x,n)-\int_0^{t_0} \frac{S_{0Y}(u-\mid a_0,x,n)G_{0Y}(u\mid a_0,x)}{S_Y(u \mid a_0, x, n) G_{0Y}(u \mid a_0, x)}\Lambda_Y(du \mid a_0, x, n)\\
	&=& \int_0^{t_0}\frac{S_{0Y}(u-\mid a_0,x,n)}{S_Y(u\mid a_0,x,n)}(\Lambda_{0Y}-\Lambda_Y)(du\mid a_0,x,n).
	\ese
	The Duhamel equation applied to $S_Y-S_{0Y}$ gives
	\bse
	S_Y(t\mid a_0,x,n)-S_{0Y}(t\mid a_0,x,n)=S_Y(t\mid a_0,x,n)\int_0^t \frac{S_{0Y}(u-\mid a_0,x,n)}{S_Y(u\mid a_0,x,n)}(\Lambda_{0Y}-\Lambda_Y)(du\mid a_0,x,n).
	\ese
	Thus, $\E\{H_{S_Y,G_{0Y},t_0,a_0}(\wt{Y},\Delta_Y,X,N)\mid A=a_0,X=x,N=n\}$ equals to
	\bse
	\frac{S_Y(t_0\mid a_0,x,n)-S_{0Y}(t_0\mid a_0,x,n)}{S_Y(t_0\mid a_0,x,n)}.
	\ese
	Therefore, under Case 3, 
	\bse
	\E\left\{\phi_{a_0,t_0}^F(\wt{Y},\Delta_Y,X,N;\eta)\mid A=a_0,X=x,N=n\right\}
	&=&
	S_Y(t_0\mid a_0,x,n)
	\left[
	1-
	\frac{S_Y(t_0\mid a_0,x,n)-S_{0Y}(t_0\mid a_0,x,n)}
	{S_Y(t_0\mid a_0,x,n)}
	\right]\\
	&=&
	S_{0Y}(t_0\mid a_0,x,n).
	\ese
	Taking expectation over $(X,N)$ yields
	\bse
	\E\left\{\phi_{a_0,t_0}^F(O^{\text{full}};\eta)\right\}=S_{a_0}(t_0).
	\ese
	
	Since $G_N=G_{0N}$, we have proved in Case 1 that 
	\bse
	\E\{\phi^{\mathrm{obs}}_{a_0,t_0}(O^{\text{obs}};\eta)\}
	=
	\E\{\phi_{a_0,t_0}^F(O^{\mathrm{full}};\eta)\}=S_{a_0}(t_0),
	\ese
    which establishes Case 3.
	
	\textbf{Case 4: $(\pi,G_Y,f_N)=(\pi_0,G_{0Y},f_{0N})$.} We have proved in Case 3 that when $\pi=\pi_0$ and $G_Y=G_{0Y}$,
	\bse
	\E\{\phi_{a_0,t_0}^F(O^{\mathrm{full}};\eta)\}=S_{a_0}(t_0).
	\ese
	In addition, when $f_N=f_{0N}$, from Case 2, we have shown that
	\bse
	\E\{\phi_{a_0,t_0}^{\mathrm{obs}}(O^{\mathrm{obs}};\eta)\}=E\{\phi_{a_0,t_0}^F(O^{\mathrm{full}};\eta)\}=S_{a_0}(t_0),
	\ese
    which establishes Case 4.
	
	Combining the four cases, $\E\{\phi_{a_0,t_0}^{\mathrm{obs}}(O^{\mathrm{obs}};\eta)\}=S_{a_0}(t_0)$
	whenever one of
	\bse
	&&(S_Y, G_N) = (S_{0Y}, G_{0N}), \qquad
	(S_Y, f_N) = (S_{0Y}, f_{0N}), \\
	&&(\pi, G_Y, G_N) = (\pi_0, G_{0Y},G_{0N}),\qquad
	(\pi, G_Y, f_N) = (\pi_0, G_{0Y},f_{0N}).
	\ese
	holds.

\end{proof}

\begin{Lem}
	\label{lem:eif-consistency}
	Let
	\bse
    \phi^{\mathrm{obs}}_{\infty,a_0,t_0}=\phi^{\mathrm{obs}}_{a_0,t_0}(O^{\mathrm{obs}};\eta_\infty), \qquad \phi^{\mathrm{full}}_{\infty,a_0,t_0}=\phi^{\mathrm{full}}_{a_0,t_0}(O^{\mathrm{full}};\eta_\infty),
	\ese
	where
	\bse
	\eta_\infty=(\pi_\infty,G_{\infty,Y},G_{\infty,N},S_{\infty,Y},f_{\infty,N}).
	\ese
	Assume Conditions \ref{assump:c1}---\ref{assump:c2} hold, then
	\bse
	\max_k\left\|\wh\phi^{\mathrm{obs}}_{n,k,a_0,t_0}-\phi^{\mathrm{obs}}_{\infty,a_0,t_0}
	\right\|_2=o_p(1).
	\ese
\end{Lem}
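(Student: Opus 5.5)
The plan is to write $\wh\phi^{\mathrm{obs}}_{n,k,a_0,t_0}-\phi^{\mathrm{obs}}_{\infty,a_0,t_0}$ using the decomposition \eqref{eq:phi_decompose}, $\phi^{\mathrm{obs}}=T_1+T_2-T_3$, and to bound each term's difference in $L_2(P_0)$ separately; the triangle inequality and a maximum over the finitely many folds ($K$ bounded) then give the claim. Since the fold-$k$ nuisances are estimated from $\mathcal{T}_{n,k}$, which is independent of a generic new observation, every norm below is computed with the nuisances held fixed. Each bound will reduce to a finite sum of the quantities listed in Condition \ref{assump:c1}, and each of those is $o_p(1)$.

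\textbf{Step 1: the full-data influence function.} Write $\phi^F(\eta)=S_Y(t_0\mid a_0,x,n)\{1-I(a=a_0)\pi^{-1}H_{S_Y,G_Y,t_0,a_0}\}$ with $H$ as in \eqref{eq:H}, and bound $\|\phi^F(\wh\eta_{n,k})-\phi^F(\eta_\infty)\|_2$ by adding and subtracting one nuisance at a time (a telescoping argument). The key observation is that $S_Y(t_0)H$ involves $S_Y$ only through ratios $S_Y(t_0)/S_Y(u)$, $u\le t_0$. The jump term is $S_Y(t_0)/\{S_Y(\wt y)G_Y(\wt y)\}$. The compensator term can be rewritten as $\int_0^{t_0\wedge\wt y}\{S_Y(t_0)/S_Y(u)\}G_Y(u)^{-1}\Lambda_Y(du)$; integrating by parts, using $d\{1/S_Y(u)\}=\Lambda_Y(du)/S_Y(u)$ in the continuous case (the product-integral analogue otherwise), turns it into boundary terms plus an integral of $S_Y(t_0)/S_Y(u)$ against $d\{1/G_Y\}$. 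Positivity (Condition \ref{assump:c2}) bounds $1/\pi$, $1/G_Y$ and these ratios uniformly, so each telescoped difference is at most a constant times one of the sup-norm differences in Condition \ref{assump:c1}.

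\textbf{Step 2: the three terms.} For $T_1$, the product of $\Delta_N/G_N$ with $\phi^F$ is handled by the same telescoping. Here $1/G_N\le\varepsilon^{-1}$, $\phi^F$ is uniformly bounded under Condition \ref{assump:c2}, and Step 1 together with the $G_N$ clause of Condition \ref{assump:c1} gives $o_p(1)$. This requires $\wt N\le\tau_N$; I will truncate at $\tau_N$, or assume the support of $\wt N$ lies in $[0,\tau_N]$, as Condition \ref{assump:c1} implicitly does. For $T_2$, I use the same argument with $Q$ in place of $\phi^F$, invoking the $Q$ clause of Condition \ref{assump:c1} directly. For $T_3$, I integrate by parts once more: since $\Lambda_{C_N}=-\log G_N$ in the continuous case,
\[
T_3=\int_0^{\wt N}Q(u)\,d\{G_N(u)^{-1}\},
\]
which equals a boundary term $Q(\wt N)/G_N(\wt N)-Q(0)$ minus $\int_0^{\wt N}G_N(u)^{-1}dQ(u)$. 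The boundary terms are controlled by the sup-norm rates for $Q$ and $1/G_N$.

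\textbf{Main obstacle.} The difficult step is the remaining integral $\int_0^{\wt N}G_N^{-1}\,dQ$ in $T_3$, and its analogue in $H$. Sup-norm convergence of the integrand does not control the total variation of $dQ$ or $d\Lambda_{C_N}$. My first attempt is to exploit monotonicity. The map $u\mapsto1/G_N(u)$ is nondecreasing and bounded by $\varepsilon^{-1}$, so writing $T_3=\int Q\,d(1/G_N)$ and taking differences, the part $\int(\wh Q-Q_\infty)\,d(1/\wh G_N)$ is at most $\varepsilon^{-1}\sup|\wh Q-Q_\infty|$. For the other part, $\int Q_\infty\,d(1/\wh G_N-1/G_{\infty,N})$, I would integrate by parts back onto $Q_\infty$, which requires that $Q_\infty$ have bounded variation on $[0,\tau_N]$. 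I would impose this as a mild regularity condition on the limit. It holds, for example, because $Q_\infty$ is a conditional mean of a bounded function over the at-risk sets $\{N\ge u\}$. Given that condition, the last term is bounded by $\sup|1/\wh G_N-1/G_{\infty,N}|$ times the variation of $Q_\infty$, and Condition \ref{assump:c1} finishes the argument.
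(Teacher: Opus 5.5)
Your proposal is correct and follows the paper's skeleton: the split $\phi^{\mathrm{obs}}=T_1+T_2-T_3$, adding and subtracting one nuisance at a time, positivity from Condition C2 to bound the weights, and the sup-norm rates of Condition C1 to conclude. It departs from the paper in two places.

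\textbf{The full-data term.} The paper does not prove $\max_k\|\wh\phi^F_{n,k,a_0,t_0}-\phi^F_{\infty,a_0,t_0}\|_2=o_p(1)$; it imports it from Lemma 3 of \citet{Westling_2023}. Your Step 1 proves it directly. It works because $u\mapsto S_Y(t_0\mid\cdot)/S_Y(u\mid\cdot)$ and $u\mapsto 1/G_Y(u\mid\cdot)$ are both monotone and bounded on $[0,t_0]$. Whichever factor carries the difference, the other therefore has bounded variation. Incidentally, $d\{1/S_Y\}=\Lambda_Y(du)/S_Y(u)$ holds exactly when $\Lambda_Y(du)=-S_Y(du)/S_Y(u-)$, so you do not need continuity.

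\textbf{The piece $\int_0^{\wt N}Q_\infty\,d\{1/\wh G_{n,k,N}-1/G_{\infty,N}\}$ of $T_3$.} This is the more substantive difference. The paper bounds this piece by $\|Q_\infty\|_\infty$ times the total variation of $1/\wh G_{n,k,N}-1/G_{\infty,N}$, and then asserts that this variation is $O_p$ of the sup-norm difference. That step fails in general. A fine staircase approximating a smooth monotone function is uniformly close to it, yet their variation distance stays bounded away from zero. This is exactly what a Kaplan--Meier-type learner does to a smooth limit. Your integration by parts onto $Q_\infty$ is what actually closes the argument, giving the bound $\sup_u|1/\wh G_{n,k,N}-1/G_{\infty,N}|\{2\|Q_\infty\|_\infty+\mathrm{TV}(Q_\infty)\}$. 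The price is a hypothesis not among Conditions C1--C2: bounded variation of $Q_\infty$ on $[0,\tau_N]$. Your explicit handling of $\wt N\le\tau_N$ also fills a point the paper leaves implicit.

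\textbf{Your justification of the bounded-variation hypothesis needs a qualifier.} Write $B_\infty(u)=F_\infty([u,\infty))$, where $F_\infty$ is the limiting conditional law of $N$ given $(a_0,X,\wt Y,\Delta_Y)$. Then $Q_\infty(u)=\int_{[u,\infty)}\phi^F_\infty\,dF_\infty\big/B_\infty(u)$, and in the continuous case $dQ_\infty(u)=\{\phi^F_\infty(u)-Q_\infty(u)\}\,dB_\infty(u)/B_\infty(u)$. Hence the variation on $[0,\tau_N]$ is at most about $2\|\phi^F_\infty\|_\infty\log\{1/B_\infty(\tau_N)\}$. This is finite only when the limiting at-risk probability at $\tau_N$ is positive. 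For the $L_2$ bound you need it bounded away from zero, or suitably integrable, in the conditioning variables.

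If you only want $o_p(1)$ without a rate, you can weaken the hypothesis to $Q_\infty$ being regulated, meaning it has left and right limits everywhere; the at-risk form gives this automatically. Approximate $Q_\infty$ uniformly within $\delta$ by a step function. The approximation error contributes at most $\delta$ times the variation of $1/\wh G_{n,k,N}-1/G_{\infty,N}$, which is at most $2/\varepsilon$ by monotonicity and Condition C2. The step-function part is a finite sum controlled by the sup-norm difference.
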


\begin{proof}[Proof of Lemma \ref{lem:eif-consistency}]
	Write
	\bse
	\phi^{\mathrm{obs}}_{a_0,t_0}=T_1+T_2-T_3,
	\ese
	where
	\bse
	T_1&=&\frac{\Delta_N}{G_N(\wt N\mid a_0,X,\wt Y,\Delta_Y)}
	\phi^F_{a_0,t_0}(\wt Y,\Delta_Y,a_0,X,\wt N),\\
	T_2&=&\frac{1-\Delta_N}{G_N(\wt N\mid a_0,X,\wt Y,\Delta_Y)}
	Q(\wt N\mid a_0,X,\wt Y,\Delta_Y),\\
	T_3&=&\int_0^{\wt N}\frac{Q(u\mid a_0,X,\wt Y,\Delta_Y)}{G_N(u\mid a_0,X,\wt Y,\Delta_Y)}d\Lambda_{C_N}(u\mid a_0,X,\wt Y,\Delta_Y),
	\ese
    where $Q(u\mid a,x,\wt y,\delta_y)$ is defined in \eqref{eq:Q}. For simplicity, the conditioning variables
	$(a_0,X,\widetilde Y,\Delta_Y)$ will be suppressed in the notation.

    For each fold $k$, let $
    \wh T_{j,k}$, $j=1,2,3$,
    denote the corresponding terms of $T_j$ obtained by replacing the nuisance functions with their cross-fitted estimators
    $
    \wh\eta_{n,k}=(\widehat\pi_{n,k},\widehat G_{n,k,Y},\widehat G_{n,k,N},
    \widehat S_{n,k,Y},\widehat f_{n,k,N})
    $.
    Similarly, let $
    T_{j,\infty}$, $j=1,2,3$
    denote the same terms with the nuisance functions replaced by their probability limits
    $\eta_\infty$.
    
	To analyze the first term $T_1$, we have
	\bse
	\wh T_{1,k}-T_{1,\infty}
	=
	\frac{\Delta_N}{\wh G_{n,k,N}(\wt N)}\{\wh\phi^F_{n,k,a_0,t_0}-\phi^F_{\infty,a_0,t_0}\}
	+\Delta_N\phi^F_{\infty,a_0,t_0}\left\{\frac1{\wh G_{n,k,N}(\wt N)}
	-\frac1{G_{\infty,N}(\wt N)}\right\}.
	\ese
	By Condition \ref{assump:c2},
	\bse
	\left\|\frac{\Delta_N}{\wh G_{n,k,N}(\wt N)}\{\wh\phi^F_{n,k,a_0,t_0}-\phi^F_{\infty,a_0,t_0}\}
	\right\|_2=O_p(\|\wh\phi^F_{n,k,a_0,t_0}-\phi^F_{\infty,a_0,t_0}\|_2) .
	\ese
	By the same argument as Lemma 3 of \cite{Westling_2023}, together with Conditions \ref{assump:c1} and \ref{assump:c2}, we have
	\bse
	\max_k
	\|\wh\phi^F_{n,k,a_0,t_0}-\phi^F_{\infty,a_0,t_0}\|_2=o_p(1).
	\ese
    By Condition \ref{assump:c2}, we can see that the full-data EIF $\phi^F_{\infty,a_0,t_0}$ is bounded. Under the boundedness of the full-data EIF and Condition \ref{assump:c1}, we have
	\bse
	\left\|\Delta_N\phi^F_{\infty,a_0,t_0}\left\{\frac1{\wh G_{n,k,N}(\wt N)}
	-\frac1{G_{\infty,N}(\wt N)}\right\}\right\|_2=
	O_p\left(\left\|\frac1{\wh G_{n,k,N}}-\frac1{G_{\infty,N}}\right\|_2\right)
	=o_p(1).
	\ese
	Hence,
	\bse
	\max_k\|\wh T_{1,k}-T_{1,\infty}\|_2=o_p(1).
	\ese
    Define
    \bse
    \wh Q_{n,k}(u)&\equiv&
    Q\left(u\mid A,X,\wt Y,\Delta_Y;\wh\eta_{n,k}
    \right),\\
    Q_\infty(u)&\equiv&Q\left(
    u\mid A,X,\wt Y,\Delta_Y;\eta_\infty\right).
    \ese
    For notational simplicity,
    the conditioning variables $(A,X,\wt Y,\Delta_Y)$ are suppressed.

	To analyze the second term $T_2$, we first have
	\bse
	\wh T_{2,k}-T_{2,\infty}=\frac{1-\Delta_N}{\wh G_{n,k,N}(\wt N)}\left\{\wh Q_{n,k}(\wt N)-Q_\infty(\wt N)\right\}+(1-\Delta_N)Q_\infty(\wt N)\left\{\frac1{\wh G_{n,k,N}(\wt N)}
	-\frac1{G_{\infty,N}(\wt N)}\right\}.
	\ese
	By Condition \ref{assump:c1}, \ref{assump:c2} and boundedness of $Q_\infty$ (obtained by boundedness $\phi_{\infty,a_0,t_0}^F$),
	\bse
	\max_k\|\wh T_{2,k}-T_{2,\infty}\|_2&=&\max_k\ O_p(\|\wh Q_{n,k}(\wt N)-Q_\infty(\wt N)\|_2)+\max_k\ O_p\left(\left\|\frac1{\wh G_{n,k,N}}-\frac1{G_{\infty,N}}\right\|_2\right)\\
	&=&o_p(1).
	\ese
	
	For the third term $T_3$, we have
	\bse
	\wh T_{3,k}-T_{3,\infty}=\int_0^{\wt N}\frac{\wh Q_{n,k}(u)-Q_\infty(u)}{\wh G_{n,k,N}(u)}d\wh\Lambda_{n,k,C_N}(u)
	+\int_0^{\wt N}Q_\infty(u)\left\{\frac1{\wh G_{n,k,N}(u)}d\wh\Lambda_{n,k,C_N}(u)-\frac1{G_{\infty,N}(u)}d\Lambda_{\infty,C_N}(u)\right\}.
	\ese
	For the first integral above, by Condition \ref{assump:c2},
	\bse
	\E\left\{\int_0^{\wt N}\frac{\wh Q_{n,k}(u)-Q_\infty(u)}{\wh G_{n,k,N}(u)}d\wh\Lambda_{n,k,C_N}(u)\right\}^2=O_p\left[\E\left\{\sup_{u\in[0,\tau_N]}\left|\wh Q_{n,k}(u)-Q_{\infty}(u)\right|\right\}^2\right]=o_p(1).
	\ese
	For the second integral, since $Q_\infty$ is bounded, we have
	\bse
	&&\E\left[\int_0^{\wt N}Q_\infty(u)\left\{\frac1{\wh G_{n,k,N}(u)}d\wh\Lambda_{n,k,C_N}(u)-\frac1{G_{\infty,N}(u)}d\Lambda_{\infty,C_N}(u)\right\}\right]^2\\
    &\leq&
    \|Q_\infty\|_\infty^2\E\left[\int_0^{\wt N}
    \left|
    d\left\{\frac1{\wh G_{n,k,N}(u)}-\frac1{G_{\infty,N}(u)}\right\}
    \right|
    \right]^2 \\
	&=&O_p\left[\E\left\{\sup_{u\in[0,\tau_N]}\left|\frac{1}{\wh G_{n,k,N}(u)}-\frac{1}{G_{\infty,N}(u)}\right|\right\}^2\right]\\
	&=&o_p(1).
	\ese
	Therefore,
	\bse
	\max_k\|\widehat T_{3,k}-T_{3,\infty}\|_2=o_p(1).
	\ese
	
	Combining the three bounds gives
	\bse
	\max_k\|\widehat\phi^{\mathrm{obs}}_{n,k,a_0,t_0}-\phi^{\mathrm{obs}}_{\infty,a_0,t_0}\|_2
	\le\max_k\|\widehat T_{1,k}-T_{1,\infty}\|_2+\max_k\|\widehat T_{2,k}-T_{2,\infty}\|_2+\max_k\|\widehat T_{3,k}-T_{3,\infty}\|_2=o_p(1).
	\ese
	This proves the result.
\end{proof}

\begin{proof}[Proof of Theorem \ref{th:MR}]
	Let
	\bse
	\wh S_{n,a_0}(t_0)-S_{a_0}(t_0)=R_{1n}+R_{2n}+R_{3n},
	\ese
	where
	\bse
	R_{1n}&=&(\P_n-\E)\phi^{\mathrm{obs}}_{\infty,a_0,t_0}\,\\
	R_{2n}&=&\frac1K\sum_{k=1}^K\frac{Kn_k^{1/2}}{n}n_k^{1/2}(\P_{n,k}-\E)(\wh\phi^{\mathrm{obs}}_{n,k,a_0,t_0}-\phi^{\mathrm{obs}}_{\infty,a_0,t_0}),\\
	R_{3n}&=&\frac1K\sum_{k=1}^K\frac{Kn_k}{n}
	\E(\wh\phi^{\mathrm{obs}}_{n,k,a_0,t_0}-\phi^{\mathrm{obs}}_{\infty,a_0,t_0})+\left\{\E\phi^{\mathrm{obs}}_{\infty,a_0,t_0}-S_{a_0}(t_0)\right\},
	\ese
    where $\P_{n,k}$ denotes the empirical probability measure on set $\mathcal V_{n,k}$.
	
	First, by the weak law of large number,
	\bse
	R_{1n}=o_p(1).
	\ese
	
	As for $R_{2n}$, let $\mathbb{G}_{n,k}\equiv n_k^{1/2}(\P_{n,k}-\E)$, we have
	\bse
	\E\left|\mathbb G_{n,k}(\wh\phi^{\mathrm{obs}}_{n,k,a_0,t_0}-\phi^{\mathrm{obs}}_{\infty,a_0,t_0})\right|=\E\left[\E\left\{\left|\mathbb G_{n,k}(\wh\phi^{\mathrm{obs}}_{n,k,a_0,t_0}-\phi^{\mathrm{obs}}_{\infty,a_0,t_0})\right|\mid \mathcal T_{n,k}\right\}\right]=\E\left[\E\left\{\sup_{f\in \mathcal F_{n,k,t_0,a_0}}\left|\mathbb G_{n,k}f\right|\mid \mathcal T_{n,k}\right\}\right],
	\ese
	with $\mathcal F_{n,k,t_0,a_0}$ denoting the class of functions containing $\wh\phi^{\mathrm{obs}}_{n,k,a_0,t_0}-\phi^{\mathrm{obs}}_{\infty,a_0,t_0}$, which is a singleton class because $\wh\phi^{\mathrm{obs}}_{n,k,a_0,t_0}$ is a fixed function when conditioning on the training set $\mathcal T_{n,k}$. By Theorem 2.14.1 of \cite{van1996weak},
	there is a universal constant $C'$ such that
	\bse
	\E\left[\E\left\{\sup_{f\in \mathcal F_{n,k,t_0,a_0}}\left|\mathbb G_{n,k}f\right|\mid \mathcal T_{n,k}\right\}\right]\le C'\E\left[\E\left\{(\wh\phi^{\mathrm{obs}}_{n,k,a_0,t_0}-\phi^{\mathrm{obs}}_{\infty,a_0,t_0})^2\mid \mathcal T_{n,k}\right\}^{1/2}\right].
	\ese
	By Jensen's inequality, this is bounded by
	\bse
	C'\E\left[\E\left\{(\wh\phi^{\mathrm{obs}}_{n,k,a_0,t_0}-\phi^{\mathrm{obs}}_{\infty,a_0,t_0})^2\mid \mathcal T_{n,k}\right\}\right]^{1/2}=C'\left\{\E(\wh\phi^{\mathrm{obs}}_{n,k,a_0,t_0}-\phi^{\mathrm{obs}}_{\infty,a_0,t_0})^2\right\}^{1/2}.
	\ese
	In addition, 
	\bse
	\frac{Kn_k^{1/2}}{n}\le\frac{K(|n_k-n/K|+n/K)^{1/2}}{n}\le\frac{K|n_k-n/K|^{1/2}+K(n/K)^{1/2}}{n}\le \frac{K}{n}+\left(\frac{K}{n}\right)^{1/2}
	\ese
	for all $k$ since $|n_k-n/K|\le 1$ by assumption.
	Thus, together with Lemma \ref{lem:eif-consistency}, we have
	\bse
	\left|\frac1K\sum_{k=1}^K\frac{Kn_k^{1/2}}{n}\mathbb G_{n,k}(\wh\phi^{\mathrm{obs}}_{n,k,a_0,t_0}-\phi^{\mathrm{obs}}_{\infty,a_0,t_0})\right|\le O_p(n^{-1/2})\cdot o_p(1)=o_p(n^{-1/2}).	
	\ese
	That is, $R_{2n}=o_p(n^{-1/2})$.

	For the third term $R_{3n}$, by Cauchy-Schwarz inequality,
	\bse
	\left|\E(\wh\phi^{\mathrm{obs}}_{n,k,a_0,t_0}-\phi^{\mathrm{obs}}_{\infty,a_0,t_0})\right|\le
	\left\|\wh\phi^{\mathrm{obs}}_{n,k,a_0,t_0}-\phi^{\mathrm{obs}}_{\infty,a_0,t_0}\right\|_2
	=o_p(1),
	\ese
	again by Lemma \ref{lem:eif-consistency}. Hence,
	\bse
	\frac1K\sum_{k=1}^K\frac{Kn_k}{n}
	\E(\wh\phi^{\mathrm{obs}}_{n,k,a_0,t_0}-\phi^{\mathrm{obs}}_{\infty,a_0,t_0})
	=o_p(1).
	\ese
	
	Under Condition \ref{assump:c3}, the limiting nuisance functions satisfy one of the multiple robustness regimes. Therefore, by Lemma \ref{lem:MR},
	\bse
	\E\phi^{\mathrm{obs}}_{\infty,a_0,t_0}=S_{a_0}(t_0).
	\ese
	It follows that
	\bse
	R_{3n}=o_p(1).
	\ese
	
	Combining the three terms,
	\bse
	\widehat S_{n,a_0}(t_0)-S_{a_0}(t_0)
	=
	R_{1n}+R_{2n}+R_{3n}
	=
	o_p(1).
	\ese
	This proves the result.
\end{proof}

\subsection{Proof of Theorem \ref{Th:AL}}\label{prf:3}

\begin{proof}
	With the same decomposition as in the proof of Theorem \ref{th:MR},
	\bse
	\wh S_{n,a_0}(t_0)-S_{a_0}(t_0)=R_{1n}+R_{2n}+R_{3n},
	\ese
	where
	\bse
	R_{1n}&=&(\P_n-\E)\phi^{\mathrm{obs}}_{a_0,t_0},\\
	R_{2n}&=&\frac1K\sum_{k=1}^K\frac{Kn_k}{n}(\P_{n,k}-\E)(\wh\phi^{\mathrm{obs}}_{n,k,a_0,t_0}-\phi^{\mathrm{obs}}_{a_0,t_0}),\\
	R_{3n}&=&\frac1K\sum_{k=1}^K\frac{Kn_k}{n}
	\E(\wh\phi^{\mathrm{obs}}_{n,k,a_0,t_0}-\phi^{\mathrm{obs}}_{a_0,t_0}).
	\ese
	
	We have shown in the proof of Theorem \ref{th:MR} that
	\bse
	R_{2n}=o_p(n^{-1/2}).
	\ese
	
    For the third term $R_{3n}$, we analyze separately the three components
    $\widehat T_{1,k}$, $\widehat T_{2,k}$, and $\widehat T_{3,k}$, obtained from the corresponding terms $T_1$, $T_2$, and $T_3$ in the decomposition \eqref{eq:phi_decompose} by replacing the nuisance functions with their cross-fitted estimators in the observed-data EIF $\widehat\phi^{\mathrm{obs}}_{n,k,a_0,t_0}$.
	
	To analyze $\wh T_{1,k}$, define $\wh W_{n,k,N}(u\mid a,x,\wt{y},\delta_y)\equiv 
	G_{0N}(u\mid a,x,\wt y,\delta_y)/\wh G_{n,k,N}(u\mid a,x,\wt y,\delta_y)$,
	then
	\bse
	 \E(\wh T_{1,k})=\E\left\{\frac{\Delta_N}{\wh G_{n,k,N}(\wt N\mid a_0,X,\wt Y,\Delta_Y)}
	\wh\phi^F_{n,k,a_0,t_0}\right\}=
	\E\left\{\wh W_{n,k,N}(N\mid a_0,X,\wt Y,\Delta_Y)\wh \phi^F_{n,k,a_0,t_0}\right\}.
	\ese
	Using $\wh W_{n,k,N}(N\mid a_0,X,\wt Y,\Delta_Y)=1+\int_0^{\infty} I(N\ge u)d\wh W_{n,k,N}(u\mid a_0,X,\wt Y,\Delta_Y)$,
	we obtain
	\bse
	\E\left\{\frac{\Delta_N}{\wh G_{n,k,N}(\wt N\mid a_0,X,\wt Y,\Delta_Y)}
	\wh\phi^F_{n,k,a_0,t_0}\right\}&=&
	\E\left\{\int_0^N \wh\phi^F_{n,k,a_0,t_0} d\wh W_{n,k,N}(u\mid a_0,X,\wt Y,\Delta_Y)\right\}+\E(\wh\phi^F_{n,k,a_0,t_0})\\
	&=&\E\left\{\int_0^N \wh\phi^F_{n,k,a_0,t_0} d\wh W_{n,k,N}(u\mid a_0,X,\wt Y,\Delta_Y)\right\}+S_{a_0}(t_0)+o_p(n^{-1/2}),
	\ese
	where the second equality above is by Theorem 3 of \cite{Westling_2023}.
	
	Next,
	\bse
	&&\E(\wh T_{2,k})-\E(\wh T_{3,k})\\
    &=&\E\left\{
	\frac{1-\Delta_N}{\wh G_{n,k,N}(\wt N\mid a_0,X,\wt Y,\Delta_Y)}
	\wh Q_{n,k}(\wt N\mid  a_0,X,\wt Y,\Delta_Y)\right\}\\
    &&-
	\E\left\{\int_0^{\wt N}\frac{\wh Q_{n,k}(u\mid  a_0,X,\wt Y,\Delta_Y)}{\wh G_{n,k,N}(u\mid  a_0,X,\wt Y,\Delta_Y)}
	d\wh\Lambda_{n,k,C_N}(u\mid  a_0,X,\wt Y,\Delta_Y)
	\right\},
	\ese
    where $\wh Q_{n,k}(u\mid a,x,\wt y,\delta_y)=
Q(u\mid a,x,\wt y,\delta_y;\wh\eta_{n,k})$.
	
	By 
	\bse 
	d\wh W_{n,k,N}(u\mid  a_0,x,\wt y,\delta_y)=\wh W_{n,k,N}(u\mid  a_0,x,\wt y,\delta_y)\{d\wh\Lambda_{n,k,C_N}(u\mid  a_0,x,\wt y,\delta_y)-d\Lambda_{0C_N}(u\mid  a_0,x,\wt y,\delta_y)\},
	\ese
	we have
	\bse
    \E(\wh T_{2,k})-\E(\wh T_{3,k})=
	-\E\left\{\int_0^N\wh Q_{n,k}(u\mid a_0,X,\wt Y,\Delta_Y)d\wh W_{n,k,N}(u\mid a_0,X,\wt{Y},\Delta_Y)\right\}.
	\ese
	
	Combining with the expansion of the $E(\wh T_{1,k})$, we have
	\be\label{eq:123-diff}
    &&\E(\wh T_{1,k})+\E(\wh T_{2,k})-\E(\wh T_{3,k})-S_{a_0}(t_0)+o_p(n^{-1/2})\nonumber\\
	&=& \E\left\{\int_0^N \wh \phi^F_{n,k,a_0,t_0} d\wh W_{n,k,N}(u\mid a_0,X,\wt{Y},\Delta_Y)\right\}- \E\left\{\int_0^N\wh Q_{n,k}(u\mid a_0,X,\wt{Y},\Delta_Y)d\wh W_{n,k,N}(u\mid a_0,X,\wt{Y},\Delta_Y)
	\right\}\nonumber\\ 
	&=& \E\left[\int_0^N\left\{\wh\phi^F_{n,k,a_0,t_0}-Q_k(u)\right\}d\wh W_{n,k,N}(u)\right]+\E\left[\int_0^N\left\{Q_k(u)-\wh Q_{n,k}(u)\right\}d\wh W_{n,k,N}(u)\right],
	\ee
	where $Q_k(u\mid a,x,\wt y,\delta_y)= \E(\wh\phi^F_{n,k,a_0,t_0}\mid N\ge u,a,x,\wt y,\delta_y)$.
	Since 
	\bse
	&& \E\left[\int_0^N\left\{\wh\phi^F_{n,k,a_0,t_0}-Q_k(u)\right\}d\wh W_{n,k,N}(u)\mid a_0,X,\wt Y,\Delta_Y\right]\\
	&=&\int_0^\infty  \E\left[I(N\ge u)\left\{\wh\phi^F_{n,k,a_0,t_0}-Q_k(u)\right\}\mid a_0,X,\wt Y,\Delta_Y\right]d\wh W_{n,k,N}(u)\\
	&=&\int_0^\infty \left[P(N\ge u\mid a_0,X,\wt Y,\Delta_Y)Q_k(u)- \E\left\{I(N\ge u)Q_k(u)\mid a_0,X,\wt Y,\Delta_Y\right\}\right]d\wh W_{n,k,N}(u)\\
	&=&0,
	\ese
	equation \eqref{eq:123-diff} reduces to
	\bse
    &&\E(\wh T_{1,k})+\E(\wh T_{2,k})-\E(\wh T_{3,k})-S_{a_0}(t_0)+o_p(n^{-1/2})\\
    &=&\E\left[\int_0^N\left\{Q_k(u)-\wh Q_{n,k}(u)\right\}d\wh W_{n,k,N}(u)\right]\\
    &=&\E\left[\int_0^N\left\{Q_k(u)-\wh Q_{n,k}(u)\right\}\wh W_{n,k,N}(u)\left\{d\wh\Lambda_{n,k,C_N}(u)-d\Lambda_{0C_N}.(u)\right\}\right]
	\ese
	By Condition \ref{assump:c2}, we have that $\wh W_{n,k,N}$ is bounded. In addition, by condition \ref{assump:c4}, we have
	\bse
	\E\left[\int_0^N\{Q_k(u)-\wh Q_{n,k}(u)\}d\wh W_{n,k,N}(u)\right]=o_p(n^{-1/2}).
	\ese

	Finally, we obtain
	\bse
	R_{3n}=S_{a_0}(t_0)+o_p(n^{-1/2})-\E\phi^{\mathrm{obs}}_{a_0,t_0}=o_p(n^{-1/2}).
	\ese
	
	Therefore,
	\bse
	\wh S_{n,a_0}(t_0)-S_{a_0}(t_0)=
	( \P_n-\mathbb E)\phi^{\mathrm{obs}}_{a_0,t_0}+o_p(n^{-1/2}).
	\ese
	
	Multiplying by $\sqrt n$,
	\bse
	\sqrt{n}\left\{\wh S_{n,a_0}(t_0)-S_{a_0}(t_0)\right\}=\frac1{\sqrt n}
	\sum_{i=1}^n\left\{\phi^{\mathrm{obs}}_{a_0,t_0}(O_i^{\mathrm{obs}})-S_{a_0}(t_0)\right\}+o_p(1).
	\ese
	
	The result follows from the central limit theorem.
\end{proof}

\end{document}